\renewcommand\footnotetextcopyrightpermission[1]{} 
\setlist[itemize]{leftmargin=*}
\newtheorem{problem}{Problem}
\newcommand{\NP}{\ensuremath{\mathbf{NP}}\xspace}
\newcommand{\APX}{\ensuremath{\mathbf{APX}}\xspace}
\newcommand{\Pclass}{\ensuremath{\mathbf{P}}\xspace}
\newcommand{\NPhard}{{\NP}-hard\xspace}
\newcommand{\NPcomplete}{{\NP}-complete\xspace}
\newcommand{\bigO}{\ensuremath{\mathcal{O}}\xspace}
\newcommand{\msic}{\textsc{MSIC[\ensuremath{\alpha, \beta}]}\xspace}
\newcommand{\kmsic}{\ensuremath{k}-\textsc{MSIC[\ensuremath{\alpha, \beta}]}\xspace}
\newcommand{\mmscp}{\textsc{MMSCP}\xspace}
\newcommand{\mmc}{\textsc{MMC}\xspace}
\newcommand{\localSC}{\textsc{Local-SCS}\xspace}
\newcommand{\lenmax}{\ensuremath{l_{max}}\xspace}
\newcommand{\spara}[1]{\smallskip\noindent{\bf{#1}}}
\newcommand{\squishlist}{\begin{list}{$\bullet$}
  { \setlength{\itemsep}{0pt}
     \setlength{\parsep}{3pt}
     \setlength{\topsep}{3pt}
     \setlength{\partopsep}{0pt}
     \setlength{\leftmargin}{1.5em}
     \setlength{\labelwidth}{1em}
     \setlength{\labelsep}{0.5em} } }
\newcommand{\squishend}{
\end{list}  }
\def\BibTeX{{\rm B\kern-.05em{\sc i\kern-.025em b}\kern-.08emT\kern-.1667em\lower.7ex\hbox{E}\kern-.125emX}}
\tikzset{
    ncbar angle/.initial=90,
    ncbar/.style={
        to path=(\tikztostart)
        -- ($(\tikztostart)!#1!\pgfkeysvalueof{/tikz/ncbar angle}:(\tikztotarget)$)
        -- ($(\tikztotarget)!($(\tikztostart)!#1!\pgfkeysvalueof{/tikz/ncbar angle}:(\tikztotarget)$)!\pgfkeysvalueof{/tikz/ncbar angle}:(\tikztostart)$)
        -- (\tikztotarget)
    },
    ncbar/.default=0.5cm,
}
\tikzset{square left brace/.style={ncbar=0.2cm}}
\tikzset{square right brace/.style={ncbar=-0.2cm}}
\def\BibTeX{{\rm B\kern-.05em{\sc i\kern-.025em b}\kern-.08emT\kern-.1667em\lower.7ex\hbox{E}\kern-.125emX}}
\begin{document}

\fancyhead{}

\title{Discovering Interesting Cycles in Directed Graphs}

\author{Florian Adriaens}
\affiliation{%
\institution{IDLab, Ghent University}
}\email{florian.adriaens@ugent.be}
\author{Cigdem Aslay}
\affiliation{%
\institution{Aalto University}
}\email{cigdem.aslay@aalto.fi}
\author{Tijl De Bie}
\affiliation{%
\institution{IDLab, Ghent University}
}\email{tijl.debie@ugent.be}
\author{Aristides Gionis}
\affiliation{%
\institution{Aalto University}
}\email{aristides.gionis@aalto.fi}
\author{Jefrey Lijffijt}
\affiliation{%
\institution{IDLab, Ghent University}
}\email{jefrey.lijffijt@ugent.be}

%
\renewcommand{\shortauthors}{Adriaens, et al.}

%

\begin{abstract}
Cycles in graphs often signify interesting processes.
For example, cyclic trading patterns can indicate inefficiencies or economic dependencies in trade networks,
cycles in food webs can identify fragile dependencies in ecosystems,
and cycles in financial transaction networks can be an indication of money laundering.
Identifying such interesting cycles, which can also be constrained to contain a given set of query nodes, 
although not extensively studied, is thus a problem of considerable importance.
%
In this paper, we introduce the problem of discovering interesting cycles in graphs. We first address the problem of quantifying the extent to which a given cycle is interesting for a particular analyst.
We then show that finding cycles according to this interestingness measure is related to the longest cycle and maximum mean-weight cycle problems (in the unconstrained setting) 
and to the maximum Steiner cycle and maximum mean Steiner cycle problems (in the constrained setting).
A complexity analysis shows that finding interesting cycles is \NP-hard, and is \NP-hard to approximate within a constant factor in the unconstrained setting, and within a factor polynomial in the input size for the constrained setting. The latter inapproximability result implies a similar result for the maximum Steiner cycle and maximum mean Steiner cycle problems.
Motivated by these hardness results, we propose a number of efficient heuristic algorithms. We verify the effectiveness of the proposed methods and demonstrate their practical utility on two real-world use cases: a food web and an international trade-network dataset.
\end{abstract}

%

\begin{CCSXML}
<ccs2012>
<concept>
<concept_id>10002950.10003624.10003633.10010917</concept_id>
<concept_desc>Mathematics of computing~Graph algorithms</concept_desc>
<concept_significance>500</concept_significance>
</concept>
</ccs2012>
\end{CCSXML}

\ccsdesc[500]{Mathematics of computing~Graph algorithms}

%
\keywords{Subjective Interestingness; Graph Algorithms; Maximum Mean weight cycle; Maximum Mean Steiner cycle}

\maketitle

\section{Introduction}
\label{sec:intro}
Cycles occur as a natural data-mining pattern in several real-world applications.
They appear naturally in food webs, 
where cycles highlight cyclic dependencies, often revealing the fragile parts of an ecosystem~\cite{dunne2002network}.
In financial transaction data, a cycle could be an indication of a money-laundering scheme~\cite{colladon2017using}.
In biological and complex networks, a cycle is an indication of a feedback mechanism~\cite{kwon2007analysis}.
Despite the wide range of use cases, 
the problem of discovering cyclic patterns in graphs has not received much attention in the data-mining community (See Section~\ref{sec:related}).



\begin{figure}[tp]
\captionsetup[subfloat]{farskip=5pt,captionskip=1pt}
\centering
\subfloat[\label{fig:toya}]{\scalebox{0.25}{\includegraphics[width=1\linewidth]{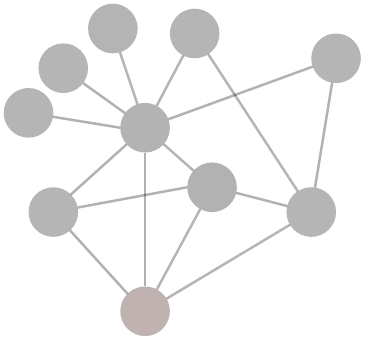}}}
\qquad
\subfloat[\label{fig:toyb}]{\scalebox{0.25}{\includegraphics[width=1\linewidth]{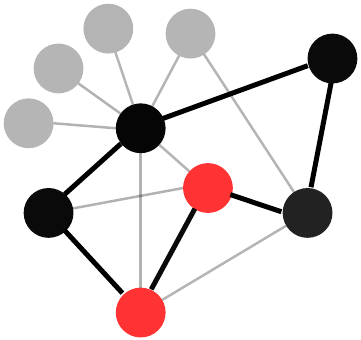}}}\\

\subfloat[\label{fig:toyc}]{\scalebox{0.25}{\includegraphics[width=1\linewidth]{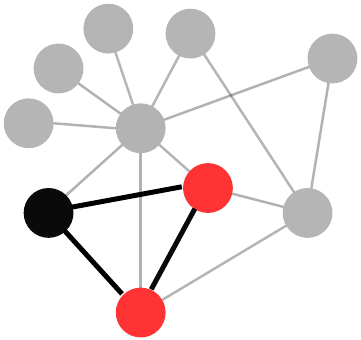}}}
\qquad
\subfloat[\label{fig:toyd}]{\scalebox{0.25}{\includegraphics[width=1\linewidth]{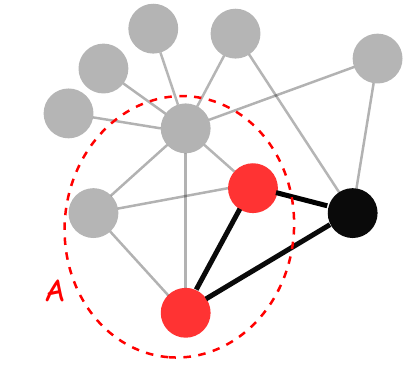}}}
\caption{The most interesting Steiner cycles connecting the red nodes according to different prior beliefs on the graph shown in (a): when we have (b) no knowledge about the graph, (c) knowledge about the individual degrees, and (d) knowledge about the degrees and the density of the community $\mathcal{A}$.\label{fig:toyexample}}
\end{figure}

In this paper, we study the problem of discovering interesting cycles in a directed and non-negatively weighted graph.
We also consider the constrained case, where the cycles have to contain a set of user-specified query nodes. Cycles containing a given set of query nodes are called \emph{Steiner} cycles \citep{SALAZARSteinercycle, Ste10}.
Identifying interesting Steiner cycles can be particularly useful in different application domains.
For example, a biologist may be interested in finding a food chain that contains both a rabbit and a hawk 
to assess the importance of the hawk population for the rabbit population.
Economists may be interested in finding surprising trading action between certain countries 
in different parts of the world.

As networks typically contain numerous cycles, 
a key challenge is the choice of a suitable interestingness measure for a cycle.
We propose to use a \emph{subjective} measure, i.e., 
taking into account which network characteristics (if any) are known a priori to the analyst.
For example, for a lay person it might be surprising that more than 50\% of the Dominican Republic's export 
is to the USA.\footnote{\url{https://tradingeconomics.com/dominican-republic/exports-by-country}}
However, for an economist possessing the knowledge that those countries have a bilateral trade agreement (which can be formalized as prior information on the trade network), such a trade volume might not come as a surprise. Thus, we are interested in designing methods that are able to take such prior knowledge into account.

Based on this observation, our proposed measure is built on \emph{subjective interestingness} \citep{silberschatz1996,debie2011}.
In the formalization of subjective interestingness, a pattern is deemed interesting 
if it is both \emph{surprising} to the user and can be communicated in a \emph{concise} way.
The measure we propose to quantify interestingness of cycles in graphs is presented in Section~\ref{sec:si}.

Figure~\ref{fig:toyexample} illustrates an example of our setting. Figure~\ref{fig:toya} shows a toy graph in which a user wishes to find a cyclic pattern containing the red query nodes. We consider three different users.
The first user has no knowledge of the graph. In this case, with every revealed edge, the user learns something about the graph, hence, the most interesting cycle is the longest cycle containing the red nodes, as shown in Figure~\ref{fig:toyb}. The second user has knowledge about the degrees of each node in the network. In this case, edges containing high degree nodes are less interesting to this user as they are expected. This prior knowledge makes the cycle shown in Figure~\ref{fig:toyc} the most interesting cycle to the second user. Our last user is a specialist. Besides knowing the degrees of the nodes, he also has prior knowledge that the red nodes are part of a dense community $\mathcal{A}$. Intra-community edges are now expected and thus are less interesting to the third user. This makes the cycle obtained in Figure~\ref{fig:toyd} the most interesting cycle for the third user. 

Following the proposed cycle interestingness measure, 
in Section~\ref{sec:problem} we formally define the two problem variants
that we study in this paper: 
(i) the Maximum Subjectively Interesting Cycle problem (\msic); and 
(ii) the Maximum Subjectively Interesting Steiner Cycle problem (\kmsic), in which the cycle is required to contain a given set of $k$ terminal nodes.
We provide an extensive computational complexity analysis in Section~\ref{sec:theory} showing that both problems are \NP-hard, and are \NP-hard to approximate within a constant factor for \msic, and within a factor polynomial in the input size for \kmsic. 
In Section~\ref{sec:algorithms}, we present a number of efficient heuristics for both problems. We show the effectiveness of our methods in practical settings through an extensive experimental evaluation in Section~\ref{sec:experiments}. We also provide two real-world use cases, one regarding a food web and another regarding an international trade network, demonstrating the potential of the proposed methods for real-world applications. 

\spara{Contributions and roadmap.}
\squishlist
\item We present a novel subjective interestingness measure for cycle patterns in graphs (Section~\ref{sec:si}). 
\item We formally define the Maximum Subjectively Interesting Cycle and Maximum Subjectively Interesting Steiner Cycle problems, and provide an extensive theoretical analysis of their computational complexity (Section~\ref{sec:theory}). 
\item We propose a number of efficient and effective heuristics for both problems  (Section~\ref{sec:algorithms}). 
\item We experimentally verify the effectiveness of our methods and demonstrate their practical utility on two real-world use cases (Section~\ref{sec:experiments}). 
\squishend

%


\section{Cycles and their interestingness}
\label{sec:si}
In this section, we first introduce the notation used in this paper and formally define the notion of a cycle pattern in weighted digraphs (Section \ref{sec:pattern}). We then explain how the interestingness of a cycle pattern can be formalized w.r.t. a background distribution that models prior knowledge about the structure of the graph (Section~\ref{sec:measure}). For the sake of clarity and completeness, we also briefly summarize the related work on how such a background distribution can be derived based on a number of relevant types of prior knowledge on the graph structure (Section~\ref{sec:background}).


\subsection{Graph notation}\label{sec:notation}
We assume a simple digraph $G=(V,E)$, with $|V| = n$ nodes and $|E| = m$ directed edges. A \emph{walk} in $G$ is defined as a sequence $v_1, v_2, \ldots, v_k$ of nodes, where $(v_i,v_{i+1}) \in E$ for $i \in [1, k-1]$ and $(v_i, v_{i+1}) \not\eq (v_j, v_{j+1})$ for all $1 \le i <  j \leq k$.  We say that a walk is \emph{closed} if $v_1 = v_k$.  A (simple) \emph{cycle} is a closed walk $v_1, v_2, \ldots,  v_k = v_1$, with no repetition of the nodes $v_i$, for $1 < i < k$. We use $v \in C$ and $e \in C$ to indicate that a node $v$ and an edge $e$ is part of a cycle $C$, respectively. We use $|C|$ to denote the length of a cycle $C$, i.e. the number of edges it contains.

\subsection{Cycles as patterns}\label{sec:pattern}
The patterns considered in this paper consist of the specification of a cycle $C$ that is stated to be present in a given graph.  
Additionally, we communicate $|C|$ positive real values $\ell_e$, one for each edge in $e \in C$. Each value $\ell_e$ represents a lower bound\footnote{We use a lowerbound and not the actual edge value. Often a user will not care about the \emph{exact} weight of an edge, but rather only if the weight is high or not, relative to the user's expectations on the graph structure.} on the weight of edge $e$, 
thus informing the user that the weight is at least $\ell_e$.
In practice, in the most interesting cycle patterns, a lower bound $\ell_e$ will be equal (or as close as possible given the number encoding used) to the observed value of the weight $\mu(e)$, as a larger $\ell_e$ provides more information.

\subsection{Subjective interestingness of cycle patterns}\label{sec:measure}
We follow the approach proposed by \citet{debie2011} in formalizing the subjective interestingness of a cycle pattern as the ratio of its \emph{Information Content} (IC),
and its \emph{Description Length} (DL), which should reflect the amount of effort it takes the data analyst to assimilate the pattern.
Here, IC is the negative log probability of the pattern being present in the data, where the probability is computed w.r.t. a so-called \emph{background distribution} $P$ that represents the prior expectations of the analyst. The IC reflects the more improbable the analyst considers a given pattern, the more information it conveys when the analyst learns the pattern is present.

It may be impossible to accurately represent all expectations of an analyst.
Yet, it was argued that given a set of constraints in terms of expectations on certain statistics of the network (e.g., node degrees, subgraph densities, etc.), a robust estimate of the background distribution can be obtained by choosing $P$ as the maximum entropy distribution, subject to these constraints~\citep{debie2011}.

As reviewed in Section~\ref{sec:background}, a wide range of prior knowledge types have the (convenient) property that the resulting background distribution factorizes as a product of independent distributions, one for each possible edge $e \in V \times V$.
We emphasize that independence is not a choice, but rather a result of the maximum entropy optimization.
For unweighted and integer weighted networks, these are respectively Bernoulli and geometric distributions.
Hence, the IC of a cycle $C$ equals
\begin{equation*}
\text{IC}(C) = -\text{log}\left(\displaystyle\prod_{e \in C} \text{Pr}(\mu (e) \geq \ell_e)\right) = \displaystyle \sum_{e \in C}w(e),
\end{equation*}
where $w(e) \triangleq -\text{log}(\text{Pr}(\mu (e) \geq \ell_e))$ denotes the information content of the edge $e$, with $\text{Pr}(\cdot)$ denoting the probability under the background distribution $P$. Note that $w(e)\geq 0$ for all $e\in V\times V$.

The DL can be computed similarly as in the work of \citet{vanleeuwen2016}. 
To communicate a cycle pattern $C$ to the user, we need to communicate $|C|$ nodes.
We assume that the cost of assimilating that a node is part of $C$ is $\log(1/q)$, and that a node is not part of $C$ is $\log(1/(1-q))$.
Hence the DL of communicating $|C|$ nodes is equal to
\begin{align*}
|C|\cdot \log \dfrac{1}{q} + & (n-|C|) \cdot \log \dfrac{1}{1-q} \\
& = |C|\cdot \log \dfrac{1-q}{q} + n \cdot \log\dfrac{1}{1-q},
\end{align*}
for $0<q<1/2$. Here, $q$ can be loosely interpreted as the expected probability that a random node is part of $C$, according to the user. Typically, $q$ is to be chosen small.
To communicate the $|C|$ numbers $\ell_e$, in practice a fixed-length encoding (e.g., floating-point) would be used, and arguably it is also in this way (i.e., a fixed number of significant digits) that an analyst would assimilate the information.
This implies a further cost that increases linearly with $|C|$.
Hence, the DL of a cycle pattern, including a specification of $v \in C$ and the lower bounds $\ell_e$ for all $e\in C$, is of the form:
\begin{align*}
\text{DL}(C) = \alpha|C|+n\beta.
\end{align*}
where $\alpha > 0$ and $\beta > 0$ are defined as 
\begin{align}\label{eq:alphaBeta}
\alpha = \log \dfrac{1-q}{q} \text{,   } \beta = \log\dfrac{1}{1-q}.
\end{align}
We now formally define the subjective interestingess of a cycle pattern. 
\begin{definition}[Subjective Interestingness]
Given a directed graph $G = (V,E)$ with non-negative edge weights $w$, and parameters $\alpha > 0$ and $\beta > 0$, the subjective interestingness $F(C)$ of a cycle $C$ is defined\footnote{We note that $F(\emptyset) = 0$.} as: 

\begin{align}
\label{fobjexp}
F(C) = \dfrac{\text{IC(C)}}{\text{DL(C)}} = \dfrac{\sum_{e \in C} w(e)}{\alpha |C|+ n\beta}.
\end{align}
\end{definition}
%


\subsection{Modeling a user's prior beliefs}\label{sec:background}

As argued by \citet{debie2011}, a good choice for the background distribution $P$ is the maximum entropy distribution, subject to particular user expectations as linear constraints on the distribution.
Here, the domain of the distribution $P$ is the set of all possible edges over a given set of nodes.
For a better understanding of these models, we recap some existing results and discuss a toyexample below.

\subsubsection{A prior on the weighted in- and out-degrees}\label{sec:rowandcolumn}

In the case of a prior belief on the weighted in- and out-degree of each node, the distribution $P$ factorizes as a product of independent geometric distributions, one for each node pair.
As discussed in \citep{Deb:10a}, using a background distribution with the empirically weighted in- and out-degrees as constraints will ensure that cycle patterns are more interesting if they involve edges from low out-degree nodes to low in-degree nodes.
As it is quite common that weighted node degrees are well-understood (e.g., biologists have a good idea about the predatory component of the diet of different species in a food web), this is an important type of background distribution in practice.



\subsubsection{Additional priors on the density of any subsets}\label{sec:subsetdensity}

Additionally, extra constraints on the density on a number of user-provided subgraphs can be incorporated. For example, an economist might have knowledge of high trading volume between a group of neighboring countries, e.g., due to a free trade agreement or a common market, or a user might know that no self-edges exist in a network.
In this case, if an edge $e=(i,j)$ is part of the specified subgraph, the probability that this edge has a weight at least $\ell$ becomes:
\begin{equation*}
\label{eq:probgeosubg}
\text{Pr}(\mu(e) \geq \ell) = \exp\big(-\ell(\lambda^{\text{out}}_i+\lambda^{\text{in}}_j+\lambda^{\text{block}})\big),
\end{equation*} 
where $\lambda^{\text{out}}_i+\lambda^{\text{in}}_j+\lambda^{\text{block}}>0$. Here, $\lambda^{\text{out}}_i$ and $\lambda^{\text{in}}_j$ denote the Lagrange multipliers associated with the resp. row- and column sums of node $i$ and $j$, and $\lambda^{\text{block}}$ denotes the Lagrange multiplier associated with the density of the specified subgraph.
\citet{Adriaens2019} showed how these multipliers can still be computed efficiently for large networks and a limited number of specified subgraphs.
Figure~\ref{fig:maxent} shows an example of fitting the MaxEnt model on a 4x4 adjacency matrix $\mathbf{A}$ with different types of constraints.
It illustrates how adding more constraints results in a closer fit of the background distribution to the empirical network.
The probability $\text{Pr}(\mathbf{A}_{12}\geq 99) = 0.038$ for (a), 0.054 for (b) and 0.53 for (c). 

\begin{figure}
\centering
\begin{adjustbox}{minipage=\linewidth,scale={0.73}{0.72}}
\subfloat{
\begin{tikzpicture}[node distance =0.6cm, proc/.style={shape=ellipse, draw},baseline][trim axis left,trim axis right]\notag
\node [draw=none] (x11) {\small0};
\node [draw=none] (x12) [right of = x11] {\small99};
\node [draw=none] (x13) [right of = x12] {\small1};
\node [draw=none] (x14) [right of = x13] {\small0};

\node [draw=none] (x21) [below of = x11] {\small97};
\node [draw=none] (x22) [right of = x21] {\small0};
\node [draw=none] (x23) [right of = x22] {\small1};
\node [draw=none] (x24) [right of = x23] {\small2};

\node [draw=none] (x31) [below of = x21] {\small1};
\node [draw=none] (x32) [right of = x31] {\small1};
\node [draw=none] (x33) [right of = x32] {\small0};
\node [draw=none] (x34) [right of = x33] {\small98};

\node [draw=none] (x41) [below of = x31] {\small2};
\node [draw=none] (x42) [right of = x41] {\small0};
\node [draw=none] (x43) [right of = x42] {\small98};
\node [draw=none] (x44) [right of = x43] {\small0};

\draw [black] (-0.1,-2) to [square left brace] (-0.1,0.25);
\draw [black] (1.9,-2) to [square right brace] (1.9, 0.25);
\end{tikzpicture}
}\addtocounter{subfigure}{-1}
\subfloat{
\begin{tikzpicture}[node distance =0.6cm,proc/.style={shape=ellipse, draw}, baseline][trim axis left,trim axis right]
\node [draw=none] (x11) {\small0};
\node [draw=none] (x12) [right of = x11] {\small99};
\node [draw=none] (x13) [right of = x12] {\small1};
\node [draw=none] (x14) [right of = x13] {\small0};

\node [draw=none] (x21) [below of = x11] {\small97};
\node [draw=none] (x22) [right of = x21] {\small0};
\node [draw=none] (x23) [right of = x22] {\small1};
\node [draw=none] (x24) [right of = x23] {\small2};

\node [draw=none] (x31) [below of = x21] {\small1};
\node [draw=none] (x32) [right of = x31] {\small1};
\node [draw=none] (x33) [right of = x32] {\small0};
\node [draw=none] (x34) [right of = x33] {\small98};

\node [draw=none] (x41) [below of = x31] {\small2};
\node [draw=none] (x42) [right of = x41] {\small0};
\node [draw=none] (x43) [right of = x42] {\small98};
\node [draw=none] (x44) [right of = x43] {\small0};

\node [rotate around={-45:(0,0)},fit=(x11)(x44), proc, thick, black, xscale=0.95,yscale=0.1,label={[xshift=-1.2cm, yshift=1cm]$S_1$}] {};

\draw [black] (-0.1,-2) to [square left brace] (-0.1,0.25);
\draw [black] (1.9,-2) to [square right brace] (1.9, 0.25);
\end{tikzpicture}
}\addtocounter{subfigure}{-1}
\subfloat{
\begin{tikzpicture}[node distance =0.6cm,proc/.style={shape=ellipse, draw}, baseline][trim axis left,trim axis right]
\node [draw=none] (x11) {\small0};
\node [draw=none] (x12) [right of = x11] {\small99};
\node [draw=none] (x13) [right of = x12] {\small1};
\node [draw=none] (x14) [right of = x13] {\small0};

\node [draw=none] (x21) [below of = x11] {\small97};
\node [draw=none] (x22) [right of = x21] {\small0};
\node [draw=none] (x23) [right of = x22] {\small1};
\node [draw=none] (x24) [right of = x23] {\small2};

\node [draw=none] (x31) [below of = x21] {\small1};
\node [draw=none] (x32) [right of = x31] {\small1};
\node [draw=none] (x33) [right of = x32] {\small0};
\node [draw=none] (x34) [right of = x33] {\small98};

\node [draw=none] (x41) [below of = x31] {\small2};
\node [draw=none] (x42) [right of = x41] {\small0};
\node [draw=none] (x43) [right of = x42] {\small98};
\node [draw=none] (x44) [right of = x43] {\small0};

\node [rotate around={-45:(0,0)},fit=(x11)(x44), proc, thick, black, xscale=0.95,yscale=0.1,label={[xshift=-1.2cm, yshift=1cm]$S_1$}] {};
\draw (0.3cm,-0.3cm) node[rectangle, thick, minimum height=1cm,minimum width=1cm,draw,label={[xshift=0.5cm, yshift=0cm]$S_2$}] {};

\draw [black] (-0.1,-2) to [square left brace] (-0.1,0.25);
\draw [black] (1.9,-2) to [square right brace] (1.9, 0.25);
\end{tikzpicture}
}\addtocounter{subfigure}{-1}

\subfloat[]{
\begin{tikzpicture}[node distance =0.6cm][trim axis left,trim axis right]
\node [draw=none] (x11) {\small25};
\node [draw=none] (x12) [right of = x11] {\small25};
\node [draw=none] (x13) [right of = x12] {\small25};
\node [draw=none] (x14) [right of = x13] {\small25};

\node [draw=none] (x21) [below of = x11] {\small25};
\node [draw=none] (x22) [right of = x21] {\small25};
\node [draw=none] (x23) [right of = x22] {\small25};
\node [draw=none] (x24) [right of = x23] {\small25};

\node [draw=none] (x31) [below of = x21] {\small25};
\node [draw=none] (x32) [right of = x31] {\small25};
\node [draw=none] (x33) [right of = x32] {\small25};
\node [draw=none] (x34) [right of = x33] {\small25};

\node [draw=none] (x41) [below of = x31] {\small25};
\node [draw=none] (x42) [right of = x41] {\small25};
\node [draw=none] (x43) [right of = x42] {\small25};
\node [draw=none] (x44) [right of = x43] {\small25};

\draw [black] (-0.1,-2) to [square left brace] (-0.1,0.25);
\draw [black] (1.9,-2) to [square right brace] (1.9, 0.25);
\end{tikzpicture}
}
\subfloat[]{
\begin{tikzpicture}[node distance =0.6cm][trim axis left,trim axis right]
\node [draw=none] (x11) {\small0};
\node [draw=none] (x12) [right of = x11] {\small33.3};
\node [draw=none] (x13) [right of = x12] {\small33.3};
\node [draw=none] (x14) [right of = x13] {\small33.3};

\node [draw=none] (x21) [below of = x11] {\small33.3};
\node [draw=none] (x22) [right of = x21] {\small0};
\node [draw=none] (x23) [right of = x22] {\small33.3};
\node [draw=none] (x24) [right of = x23] {\small33.3};

\node [draw=none] (x31) [below of = x21] {\small33.3};
\node [draw=none] (x32) [right of = x31] {\small33.3};
\node [draw=none] (x33) [right of = x32] {\small0};
\node [draw=none] (x34) [right of = x33] {\small33.3};

\node [draw=none] (x41) [below of = x31] {\small33.3};
\node [draw=none] (x42) [right of = x41] {\small33.3};
\node [draw=none] (x43) [right of = x42] {\small33.3};
\node [draw=none] (x44) [right of = x43] {\small0};

\draw [black] (-0.1,-2) to [square left brace] (-0.1,0.25);
\draw [black] (1.9,-2) to [square right brace] (1.9, 0.25);
\end{tikzpicture}
}
\subfloat[]{
\begin{tikzpicture}[node distance =0.6cm][trim axis left,trim axis right]
\node [draw=none] (x11) {\small0};
\node [draw=none] (x12) [right of = x11] {\small98};
\node [draw=none] (x13) [right of = x12] {\small1};
\node [draw=none] (x14) [right of = x13] {\small1};

\node [draw=none] (x21) [below of = x11] {\small98};
\node [draw=none] (x22) [right of = x21] {\small0};
\node [draw=none] (x23) [right of = x22] {\small1};
\node [draw=none] (x24) [right of = x23] {\small1};

\node [draw=none] (x31) [below of = x21] {\small1};
\node [draw=none] (x32) [right of = x31] {\small1};
\node [draw=none] (x33) [right of = x32] {\small0};
\node [draw=none] (x34) [right of = x33] {\small98};

\node [draw=none] (x41) [below of = x31] {\small1};
\node [draw=none] (x42) [right of = x41] {\small1};
\node [draw=none] (x43) [right of = x42] {\small98};
\node [draw=none] (x44) [right of = x43] {\small0};

\draw [black] (-0.1,-2) to [square left brace] (-0.1,0.25);
\draw [black] (1.9,-2) to [square right brace] (1.9, 0.25);
\end{tikzpicture}
}
\end{adjustbox}
\caption{(top row) A toy graph, with constraints on the in- and out degrees of each node (a, b and c), combined with constraints on the densities of the sets $S_1$ (b and c) and $S_2$ (c). (bottom row) The expected values of the edges according to the MaxEnt distribution. \label{fig:maxent}}
\end{figure}
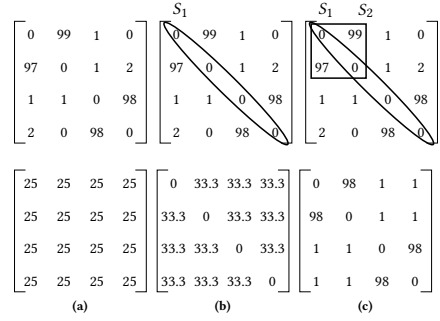

\section{Problem definition}
\label{sec:problem}
%
%


The first problem considered in this paper is the problem of finding the ``Maximum Subjectively Interesting Cycle'' in a graph.\footnote{Note that although the problem appears to have two parameters, in reality this can be reduced to one, e.g. by multiplying the objective with $\beta$ and substituting $\alpha/\beta$ with a single parameter $\gamma$.} Formally:
\begin{problem}[\msic]
\label{fobj}
Given a directed graph $G$ with non-negative edge weights, and parameters $\alpha,\beta>0$, find a simple cycle $C$ such that $F(C)$ is maximized.
\end{problem}

Additionally, we can constrain the cycle to include a given set of terminal nodes to find ``Maximum Subjectively Interesting Steiner Cycle''. This leads to the second problem we address in this paper:
\vspace{-5pt}
\begin{problem}[\kmsic]
\label{fobjquery}
Given a directed graph $G$ with non-negative edge weights, a set of $k$ terminal nodes, and parameters $\alpha,\beta>0$, find a simple cycle $C$ such that $C$ contains all the terminals and $F(C)$ is maximized.
\end{problem}

\msic is closely related to two well-known graph problems. For $\alpha = 0$, \msic is equivalent to the problem of finding the \emph{longest cycle in a digraph}, an \NP-hard problem that is known for its difficulty to approximate \citep{bjorklund,Gabowcycles}. On the other hand, for $\beta = 0$, \msic is equivalent to the problem of finding a \emph{maximum mean-cycle} in a directed graph with non-negative edge weights. This problem can be solved in polynomial time by reversing the sign of the edge weights of the input graph and running Karp's minimum mean-cycle algorithm \citep{KARP}, that is originally devised to find minimum mean-cycle in digraphs with real-valued weights. Although \msic is closely related to a tractable and to an \NP-hard problem, it is not equivalent to either one as our problem setting assumes $\alpha > 0$ and $\beta > 0$. Yet, in Section~\ref{sec:theory} we show that \msic is \NP-hard (as the longest cycle problem), while we discuss how Karp's algorithm can be used to provide approximations. This is a plausible approach, as in practice $\alpha \gg \beta$ (it takes more effort to assimilate the fact that a node is part of a cycle pattern than that a node is not part of a cycle pattern), such that the interestingness measure is closer to the maximum mean-cycle objective than to the longest cycle objective.

Likewise, \kmsic is closely related to two Steiner cycle problem variants. For $\alpha = 0$, \kmsic is equivalent to the problem of finding a \emph{maximum Steiner cycle}, i.e. Steiner cycle with max. total weight, and for $\beta = 0$, \kmsic is equivalent to the problem of finding a \emph{maximum mean Steiner cycle} (\mmscp) in a digraph with non-negative edge weights. To the best of our knowledge, there are no known results on the approximability of either problems. Besides being \NP-hard, we show in the next section that neither of these Steiner cycle problems, nor \kmsic, can  be approximated within a ratio that is polynomial in the number of nodes.

%
%


\section{Computational Complexity}
\label{sec:theory}
Unsurprisingly, both \msic and \kmsic are \NP-hard problems. We show this in the following section.
Section~\ref{sec:inapprox} is dedicated to proving some strong inapproxability results.
We note that the reduction in Lemma~\ref{lem:approxquery2} can directly be applied to the Max. Steiner Cycle and Max. Mean Steiner Cycle problems, which is a novel result in itself.

\subsection{Hardness results}\label{sec:hard}
The hardness of both problems directly follows from Lemma~\ref{lem:approx1} and Lemma~\ref{lem:approxquery2}, but are presented here separately for completeness.
\begin{lemma}
\label{lem:hardness}
\msic is \NP-hard.
\end{lemma}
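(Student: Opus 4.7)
The plan is to reduce from Directed Hamiltonian Cycle (DHC), a classical \NPcomplete problem. Given an instance $G=(V,E)$ of DHC with $|V|=n$, I would construct an \msic instance on the same digraph $G$ by assigning unit weight $w(e)=1$ to every edge, leaving $\alpha,\beta>0$ as arbitrary fixed parameters. This transformation is clearly polynomial, so the task reduces to arguing that an optimal \msic solution on this instance encodes the answer to DHC.

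The key observation is that under uniform edge weights the objective collapses to a function of cycle length alone: $F(C) = |C|/(\alpha |C| + n\beta)$. Viewed as a function of a continuous length variable $k\geq 1$, its derivative $n\beta/(\alpha k + n\beta)^2$ is strictly positive whenever $\beta>0$, so $F$ is strictly monotone increasing in $|C|$. Consequently, any maximizer of $F$ is precisely a longest simple cycle of $G$, and a direct substitution shows $\max_C F(C) = 1/(\alpha+\beta)$ if and only if $G$ admits a Hamiltonian cycle (with the value being strictly smaller for any shorter cycle). Combined with membership in \NP (a candidate cycle is verified in polynomial time), this yields the claim.

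The main obstacle here is essentially nonexistent: the reduction is direct and exploits the fact that \msic degenerates to the longest-cycle problem on uniform-weight inputs. The only subtlety worth flagging is that the positivity assumption $\beta>0$ is precisely what drives the monotonicity argument; the excluded case $\beta=0$ would reduce the objective to maximum mean-weight cycle, which is polynomial-time solvable via Karp's algorithm on the sign-reversed weights, and is therefore explicitly ruled out by the problem definition.
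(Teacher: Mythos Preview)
Your proposal is correct and matches the paper's own proof essentially line for line: both reduce from the Hamiltonian cycle problem by assigning a uniform edge weight (the paper uses a generic constant $\rho$, you use $1$), observe that $F(C)$ then depends only on $|C|$ and is strictly increasing in the cycle length, and conclude that the optimum attains the value $\rho n/(\alpha n + n\beta)$ (i.e., $1/(\alpha+\beta)$) if and only if $G$ is Hamiltonian. Your remark about the necessity of $\beta>0$ for monotonicity is a nice clarification that the paper leaves implicit.
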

\begin{proof}
We use a reduction from the \NPcomplete Hamiltonian cycle problem. Given a digraph $G=(V,E)$, the Hamiltonian cycle problem is to determine whether $G$ has a simple cycle that visits every node  exactly once. Given an instance of the Hamiltonian cycle problem, we construct an instance of \msic by assigning a constant weight to every edge, $w(e) = \rho$, $\forall e \in E$. Then, for any cycle $C$, we have $F(C) = \frac{\rho |C|}{\alpha |C|+ n \beta}$. Notice that, in all the instances  of \msic with uniform edge weights, $F(C)$ monotonically increases with $|C|$, ceteris paribus, and obtains the maximum possible value whenever $|C| = n$. Thus, $F(C) = \frac{\rho n}{\alpha n+ n\beta}$ iff $C$ is a Hamiltonian cycle and $F(C)  = \frac{\rho |C|}{\alpha |C| + n\beta} < \frac{\rho n}{\alpha n+ n\beta}$ otherwise. Hence, we can use the solution to \msic to decide the solution to the Hamiltonian cycle problem. 
\end{proof}

\begin{lemma}
\kmsic is \NP-hard.
\end{lemma}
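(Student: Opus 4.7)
The plan is to adapt the reduction already used for Lemma~\ref{lem:hardness} to the Steiner variant. Given an instance $G=(V,E)$ of the \NPcomplete Hamiltonian cycle problem, construct an instance of \kmsic on the same graph by setting $w(e)=\rho$ for every $e\in E$ and declaring the terminal set to be the whole vertex set $V$, i.e.\ $k=n$. Any feasible cycle $C$ for this instance must contain all $n$ terminals, so it is forced to be a Hamiltonian cycle of $G$. Since the weights are uniform, $F(C)=\rho|C|/(\alpha|C|+n\beta)$ and the argument in Lemma~\ref{lem:hardness} shows that $F(C)$ strictly increases with $|C|$; hence the optimum is attained iff $G$ admits a Hamiltonian cycle, and \kmsic must be at least as hard as deciding Hamiltonicity.

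A slightly different route is to note that \msic can be recovered from \kmsic simply by taking the terminal set to be empty, which makes \msic a special case of \kmsic. \NP-hardness of \kmsic would then follow immediately from Lemma~\ref{lem:hardness} without any additional reduction. I would include both remarks: the direct Hamiltonian reduction to cover the case where the definition insists on a nonempty terminal set, and the ``special case'' remark as a one-line sanity check.

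In either presentation there is essentially no new technical obstacle beyond what appeared in the unconstrained proof. The only thing worth double-checking is the compatibility of the reduction with the terminal-inclusion constraint: we need that forcing every vertex to be a terminal neither trivializes the instance nor changes the decision outcome. This is immediate because every Hamiltonian cycle, by definition, visits every vertex, so a cycle containing all terminals exists iff $G$ is Hamiltonian. The reduction is clearly polynomial, and the proof can be stated in a few lines, matching the style of Lemma~\ref{lem:hardness}; the heavier work is deferred to the inapproximability statement of Lemma~\ref{lem:approxquery2}.
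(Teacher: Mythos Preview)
Your reduction is correct, but it takes a different route from the paper's. The paper assigns unit weights and picks an \emph{arbitrary} subset of $k$ terminals (so $k$ can be any fixed value, in particular $k=1$); it then argues via the optimal objective value, using that with uniform weights $F(C)$ is strictly increasing in $|C|$, so $F(C^*)=\frac{n}{\alpha n+n\beta}$ iff $G$ is Hamiltonian. Your argument instead forces $k=n$, which turns the Steiner constraint into a feasibility test: a qualifying cycle exists iff $G$ is Hamiltonian. This is arguably cleaner, but it only directly shows hardness for instances where the terminal set may be all of $V$; the paper's version establishes hardness for every fixed $k\ge 1$, which is the form needed later when Lemma~\ref{lem:approxquery2} speaks of ``any $k\ge 1$''. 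Your secondary remark about $k=0$ recovering \msic is fine as intuition, but note that the paper consistently treats $k\ge 1$, so it would not stand on its own as the proof.
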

\begin{proof} 
We use a reduction from the \NPcomplete Hamiltonian cycle problem. Let $G=(V,E)$ be a given instance of the Hamiltonian cycle problem. We construct an instance of \kmsic by assigning $w(e) = 1, \forall e \in E$, and picking an arbitrary subset of $k$ nodes as the terminals. Given uniform weights, for any Steiner cycle $C$, $F(C)$ monotonically increases with $|C|$, ceteris paribus. Let $C^*$ denote the optimal solution to \kmsic.  Then, $G$ is a YES instance of the the Hamiltonian cycle problem iff $F(C^*) = \dfrac{n}{\alpha n + n \beta}$ and NO instance iff $F(C^*) < \dfrac{n}{\alpha n + n \beta}$.  
\end{proof}

\subsection{Inapproximability results}\label{sec:inapprox}

\begin{lemma}
\label{lem:approx1}
There exists no constant-factor polynomial-time approximation algorithm for \msic, unless $\Pclass = \NP$. 
\end{lemma}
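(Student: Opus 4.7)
The plan is to give a gap-preserving reduction from the longest cycle problem in digraphs, leveraging the fact (already noted in Section~\ref{sec:problem}) that longest cycle is notoriously hard to approximate, and in particular is \NP-hard to approximate within any constant factor.

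First, I would observe that on any digraph $G=(V,E)$ endowed with the trivial uniform weighting $w(e)=1$, the \msic objective reduces to $F(C)=|C|/(\alpha|C|+n\beta)$, a strictly increasing function of $|C|$. Hence the \msic-optimal cycle on this instance is precisely the longest simple cycle $C^\ast$ in $G$, of length $L^\ast\leq n$, and $F(C^\ast)$ depends only on $L^\ast$, $\alpha$, $\beta$, and $n$.

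Next, I would suppose toward contradiction that a polynomial-time $c$-approximation for \msic exists for some constant $c>1$. Let $C$ be its output on the above instance, so that $F(C)\geq F(C^\ast)/c$. A short algebraic rearrangement of
\begin{equation*}
\frac{|C|}{\alpha|C|+n\beta} \;\geq\; \frac{L^\ast}{c\,(\alpha L^\ast+n\beta)}
\end{equation*}
would yield the lower bound $|C|\geq n\beta L^\ast/\bigl((c-1)\alpha L^\ast+cn\beta\bigr)$, and using only $L^\ast\leq n$ to simplify the denominator would give $|C|/L^\ast \geq \beta/\bigl((c-1)\alpha+c\beta\bigr)$. This ratio depends on $c$, $\alpha$, $\beta$ but not on the instance, so the same algorithm would also constitute a constant-factor approximation for the longest cycle problem in digraphs.

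Finally, I would invoke the classical constant-factor inapproximability of longest cycle in digraphs to reach a contradiction with $\Pclass\neq\NP$, completing the proof. The main obstacle I anticipate is essentially the choice of what to reduce from: once a suitable inapproximability result for longest cycle is taken as given, the rest is an elementary algebraic manipulation of the $F$ ratio. A fully self-contained argument would instead have to redo the gap-amplifying reduction for longest cycle itself (typically from a gap version of SAT), which is substantially more delicate than the short algebraic step used here.
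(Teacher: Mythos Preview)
Your proposal is correct and takes essentially the same approach as the paper: both give an approximation-preserving reduction from the Longest Cycle problem by endowing the digraph with uniform edge weights, observing that $F(C)$ then increases monotonically in $|C|$, and showing algebraically that a constant-factor approximation for \msic translates into a constant-factor approximation for Longest Cycle. The paper frames this explicitly as an \emph{A-reduction} and obtains the slightly looser bound $|C^*|/|C_A|\le r(1+\alpha/\beta)$ via $2\le |C|\le n$, whereas your manipulation yields the tighter $|C|/L^\ast \ge \beta/((c-1)\alpha+c\beta)$ using only $L^\ast\le n$; but the structure and the invoked inapproximability source are identical.
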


\begin{proof}
To prove this, we use an approximation preserving reduction from the Longest Cycle problem in digraphs~\cite{bjorklund}. Specifically, we use an \emph{A-reduction} \citep{Crescenzi1997ASG} that preserves membership in \APX, which is the class of \NP optimization problems that admit polynomial-time constant-factor approximation algorithms. 

To show that a reduction from Longest Cycle problem to \msic is an \emph{A-reduction}, we need to show that (i) there exists a polynomial-time computable function $g$ mapping the solutions of \msic to the solutions of the Longest Cycle problem, and (ii) a polynomial-time computable function $c: \mathbb{Q} \cap (1, \infty) \rightarrow \mathbb{Q} \cap (1, \infty)$ such that any algorithm providing $r$-approximation to \msic with the approximate solution $C$ provides a $c(r)$-approximation to the Longest Cycle problem using the approximate solution $g(C)$. 

Let $G=(V,E)$ be a given an instance of the Longest Cycle problem in digraphs. We construct an instance of \msic by assigning a constant weight $w(e)=\rho$, $\forall e \in E$ in $G$. Assume there exists a polynomial-time algorithm $A$ which provides a $r$-approximation to \msic for some constant $r \ge 1$. Let $C^*$ denote the optimal solution to \msic and let $C_A$ denote the solution returned by algorithm $A$. Then we have, 
\begin{align}\label{eq:aReduction}
\dfrac{F(C^*)}{F(C_A)} = \dfrac{\rho |C^*|}{\rho |C_A|} \cdot \dfrac{\alpha |C_A| +n \beta}{\alpha |C^*| + n\beta } \le r
\end{align}
Reminding that $F(C)$ monotonically increases with $|C|$ in such instances of \msic with uniform edge weights, we define $g$ as the identity function, and use the solutions of \msic as the solutions of the Longest Cycle problem. Then, by re-arranging (\ref{eq:aReduction}) and using the fact that $2 \le |C| \le n$ for any cycle $C$, we have: 
\begin{align*}
\dfrac{|C^*|}{|C_A|} &\le r \cdot \dfrac{\alpha |C^*| +n \beta}{\alpha |C_A| + n\beta } \\
&\le r \cdot \dfrac{n (\alpha + \beta)}{2 \alpha + n\beta } \\
&\le r \cdot (1 + \alpha / \beta )
\end{align*}
We have just showed that the Longest Cycle problem is A-reducible to \msic. Finally, $\Pclass = \NP$, given that the Longest Cycle problem in digraphs is not in \APX \citep{bjorklund,Gabowcycles}, we conclude that \msic is also not in \APX. 
\end{proof}

\citet{bjorklund} show that there exists no polynomial-time approximation algorithm for the Longest Cycle problem in unweighted Hamiltonian digraphs with performance ratio $n^{1-\epsilon}$ for any fixed $\epsilon > 0$, unless $\Pclass = \NP$. Next we show the implications of this strong inapproximability result for solving \msic in Hamiltonian digraphs with uniform edge weights. 

\begin{lemma}
\label{lem:approx2}
It is \NP-hard to approximate \msic in a Hamiltonian digraph with uniform weights within a factor of 
$$\dfrac{n^{1-\epsilon} + \alpha / \beta}{1+ \alpha / \beta},$$ 
for any $\epsilon > 0$, unless $\Pclass = \NP$. 
\end{lemma}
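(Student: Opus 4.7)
The plan is to reuse the approximation-preserving reduction from Lemma~\ref{lem:approx1}, but instantiated on unweighted Hamiltonian digraphs so that \citet{bjorklund}'s $n^{1-\epsilon}$ inapproximability lower bound for Longest Cycle can be pulled through the reduction. Concretely, I would start from an arbitrary Hamiltonian digraph $G=(V,E)$ on $n$ nodes, set $w(e)=\rho$ for all $e\in E$ with some fixed $\rho>0$, and treat this as an instance of \msic. Since $G$ is Hamiltonian, the longest simple cycle has length $n$, so the optimal \msic solution $C^*$ satisfies $|C^*|=n$ and $F(C^*)=\rho/(\alpha+\beta)$.

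Next, I would suppose for contradiction that there is a polynomial-time $r$-approximation $A$ for \msic with ratio
\[
r \;=\; \dfrac{n^{1-\epsilon} + \alpha/\beta}{1+\alpha/\beta},
\]
and let $C_A$ denote its output. Plugging $|C^*|=n$ into the chain of inequalities developed in the proof of Lemma~\ref{lem:approx1}, the $r$-approximation guarantee reduces to
\[
\frac{n}{|C_A|}\,\frac{\alpha|C_A|+n\beta}{n(\alpha+\beta)} \;\le\; r,
\]
which after dividing by $\beta$ and rearranging becomes
\[
\frac{n}{|C_A|} \;\le\; r\bigl(1+\alpha/\beta\bigr) - \alpha/\beta \;=\; n^{1-\epsilon}.
\]
Thus $|C_A|\ge n^{\epsilon} = n/n^{1-\epsilon}$. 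Because the edge weights are uniform, the identity map $g$ from Lemma~\ref{lem:approx1} sends $C_A$ to a simple cycle in $G$ of the same length, so this would constitute a polynomial-time $n^{1-\epsilon}$-approximation for the Longest Cycle problem on Hamiltonian digraphs.

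By \citet{bjorklund}, no such approximation exists unless $\Pclass=\NP$, yielding the claimed inapproximability. The only real obstacle is algebraic bookkeeping: I must make sure that the bound $|C^*|=n$ (available precisely because $G$ is Hamiltonian) is exploited tightly, rather than using the looser estimate $|C^*|\le n$ that appears in Lemma~\ref{lem:approx1}; it is this tight use that lets the ratio cancel cleanly to $n^{1-\epsilon}$ and transports Björklund's bound into a matching bound on $r$. The choice of $\rho$ is immaterial, as it cancels in every ratio, so the reduction is indeed polynomial-time and approximation-preserving in the required strong sense.
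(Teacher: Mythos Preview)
Your proposal is correct and follows essentially the same route as the paper: both start from an unweighted Hamiltonian digraph, assign uniform edge weights, assume an approximation algorithm with the stated ratio, and then algebraically derive $|C_A|\ge n^{\epsilon}$, contradicting the Bj\"orklund--Husfeldt--Khanna inapproximability of Longest Cycle. The only cosmetic difference is that you frame the argument as a tightening of the Lemma~\ref{lem:approx1} reduction using $|C^*|=n$, whereas the paper writes the approximation inequality~(\ref{eq:inapprox2}) directly and rearranges; the computations are identical.
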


\begin{proof}
Let $G=(V,E)$ be an unweighted Hamiltonian digraph denoting an instance of the Longest Cycle problem\cite{bjorklund}. Given $G = (V,E)$, we construct an instance of  \msic by assigning a constant weight to every edge, $w(e) = \rho$, $\forall e \in E$. Assume by contradiction that there exists such an approximation algorithm $A$ which finds a solution $C_A$ satisfying
\begin{align}\label{eq:inapprox2}
\dfrac{\rho |C_A|}{\alpha |C_A| + n \beta} \ge \dfrac{1+ \alpha / \beta}{n^{1-\epsilon} + \alpha / \beta} \cdot  \dfrac{\rho n}{\alpha n + n \beta}
\end{align}
By re-arranging the terms in (\ref{eq:inapprox2}), we obtain $|C_A| \ge n^{\epsilon}$ 
implying that any such approximation algorithm to \msic leads to a polynomial-time $n^{1-\epsilon}$-approximation algorithm for the Longest Cycle problem in unweighted Hamiltonian digraphs, which is a contradiction, unless $\Pclass = \NP$.  
\end{proof}




Next we show the hardness of approximating \kmsic. 

\begin{lemma}
\label{lem:approxquery2}
It is \NP-hard to approximate \kmsic within a factor polynomial in the input size in digraphs with non-negative edge weights for any $k \ge 1$, unless $\Pclass = \NP$. 
\end{lemma}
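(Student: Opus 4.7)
My plan is to adapt the approximation-preserving reduction of Lemma~\ref{lem:approx2} to the constrained Steiner setting. I would reduce from the Longest Cycle problem in unweighted Hamiltonian digraphs, which \citet{bjorklund} showed to be inapproximable within factor $n^{1-\epsilon}$ for every fixed $\epsilon > 0$, unless $\Pclass = \NP$. Given such a Hamiltonian digraph $G = (V,E)$ with $|V|=n$, I would construct a \kmsic instance on the same graph by assigning uniform edge weights $w(e)=1$ and choosing any $k$ vertices of $V$ as terminals. Since $G$ is Hamiltonian, every vertex lies on the Hamiltonian cycle, so the choice of $k$ terminals is immaterial and the Hamiltonian cycle itself is a feasible Steiner cycle of length $n$ attaining the optimum $F^* = n/(\alpha n + n\beta) = 1/(\alpha+\beta)$.

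Assuming a polynomial-time $r$-approximation algorithm for \kmsic that returns a Steiner cycle $C$ with $F(C) \ge F^*/r$, I would rewrite the inequality $|C|/(\alpha|C| + n\beta) \ge 1/(r(\alpha+\beta))$ to obtain $|C| \ge n\beta/\big(r(\alpha+\beta)-\alpha\big) \ge n/\big(r(1+\alpha/\beta)\big)$ for any $r \ge 1$. Because $C$ is itself a simple cycle in $G$ and the Longest Cycle of $G$ has length $n$, this means $C$ is an $r \cdot (1+\alpha/\beta)$-approximate solution to the Longest Cycle problem. Since $\alpha$ and $\beta$ are fixed constants, any polynomial-factor approximation of \kmsic would produce a polynomial-factor approximation of Longest Cycle in Hamiltonian digraphs, contradicting the $n^{1-\epsilon}$ inapproximability of \citet{bjorklund}.

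The main subtlety, and the reason I do not anticipate any real obstacle, is uniformity across $k \ge 1$: Hamiltonicity of $G$ guarantees that any $k$ chosen terminals lie on the Hamiltonian cycle, so the structure of \kmsic on these instances coincides with that of the unconstrained Longest Cycle problem. As foreshadowed in the introduction of this section, the same construction also transfers the inapproximability to the Max Steiner Cycle problem (the $\alpha=0$ specialization, where with uniform weights the total weight equals the cycle length) and, with a modest adjustment to the edge-weight assignment so that the mean is non-degenerate, to the Max Mean Steiner Cycle problem (the $\beta=0$ specialization), yielding the promised novel hardness for both Steiner cycle variants.
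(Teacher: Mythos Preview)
Your reduction establishes a weaker inapproximability than the lemma claims. By transferring the $n^{1-\epsilon}$ hardness of Longest Cycle through a constant-factor loss, you show that \kmsic cannot be approximated within roughly $n^{1-\epsilon}/(1+\alpha/\beta)$ for any $\epsilon>0$; this is precisely the strength of Lemma~\ref{lem:approx2} for the unconstrained problem. The present lemma asserts something stronger: for \emph{every} polynomial $p$, a $p(n)$-approximation is \NP-hard. Your argument does not rule out, say, an $n^2$-approximation, because an $n^2$-approximation for \kmsic would translate via your inequality into a cycle of length $\Omega(1)$ in the Hamiltonian digraph, which is trivial and contradicts nothing. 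The paper obtains the full polynomial inapproximability by reducing from a \emph{decision} problem (the restricted two-vertex-disjoint-paths problem R2VDP) and inserting a single edge of weight $W=n\cdot p(n)+1$, where $p$ is the target polynomial; this creates a multiplicative gap of order $p(n)$ between YES and NO instances, so a $p(n)$-approximation distinguishes them.

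A second gap concerns the promised corollaries. With uniform weights every cycle has mean exactly $1$, so the Max Mean Steiner Cycle instance you build is trivially $1$-approximable and yields no hardness at all; the ``modest adjustment'' you allude to is in fact the crux of the argument. The paper's heavy-edge construction is what makes the mean of a cycle depend on whether it traverses that edge, and thus what drives the corollary for both the Max Steiner Cycle and Max Mean Steiner Cycle problems.
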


\begin{proof}
To prove this, we use a reduction from the \NP-complete Restricted Two node Disjoint Paths problem (R2VDP), which was introduced by \citet{bjorklund} as the restricted version of the Two node Disjoint Paths problem (2VDP)~\cite{Perl1978FindingTD}. Given a digraph of order $n \ge 4$ and four nodes, 2VDP problem seeks to determine whether there exist two node disjoint paths, one from node $1$ to $2$ and one from node $3$ to $4$. In the restricted version R2VDP of 2VDP, all the YES instances of 2VDP are guaranteed to contain two such paths that together exhaust all nodes of $G$, i.e., the graph $G$ with the additional edges from node $2$ to $3$ and from $4$ to $1$, contains a Hamiltonian cycle through these edges in YES instances to R2VDP. 

Assume that there exists an approximation algorithm for \kmsic with ratio $p(n) \ge 1$ that is a polynomial of $n$. We show how to decide R2VDP by using such algorithm with approximation ratio $p(n)$. Given an instance of R2VDP, we construct an instance of \kmsic as follows. We connect 2 copies $G_1$ and $G_2$ of $G$ by adding edges (i) from node $2$ in $G_1$ to node $1$ in $G_2$, and (ii) from node $4$ in $G_2$ to node $3$ in $G_1$. We also add an edge $(4,1)$ in $G_1$ and an edge $(2,3)$ in $G_2$. For each edge we assign a weight of $1$, except for the edge $(2,3)$ in $G_2$ for which we assign a weight of $W = n \cdot p(n) + 1$. Finally, we set the node $1$ of $G_1$ as the terminal for $1$-\msic. Let $G' = (V', E')$ denote the resulting graph, as shown in Figure~\ref{fig:approxred}.

Let $C^*$ denote the optimal solution to $1$-\msic in $G'$. If $G$ is a YES instance of R2VDP, then $C^*$ is a Hamiltonian cycle in $G'$, containing $2n$ edges with a total weight of $2n  + n \cdot p(n)$,  since,  $F(C^*) = \dfrac{2n  + n \cdot p(n)}{2 \alpha n + n \beta} > \dfrac{|C|}{\alpha |C| + n \beta} $ for any other Steiner cycle $C$ that is not Hamiltonian, thus, not containing the edge $(2,3)$ in $G_2$. On the other hand, if $G$ is a NO instance to R2VDP, then $C^*$ can have at most $2n - 2$ edges, excluding the edge $(4,1)$ in $G_1$ and the edge $(2,3)$ in $G_2$, thus, $\dfrac{|C^*|}{|C^*| + 1} \le \dfrac{2n - 2}{\alpha (2n-2) + n \beta}$. 

We have just shown that, unless $\Pclass = \NP$, it is not possible to approximate $1$-\msic within a factor that is polynomial in the input size in digraphs with non-negative edge weights. It is easy to see that as $k$ increases, the problem only becomes harder, with $k=n$ corresponding to the search for a Hamiltonian cycle. Thus, the result follows for any $k \ge 1$. 
\end{proof}

\begin{corollary}
It is \NP-hard to approximate Maximum Steiner Cycle and Maximum-Mean Steiner Cycle problems within a factor polynomial in the input size in digraphs with non-negative edge weights for any $k \ge 1$, unless $\Pclass = \NP$. 
\end{corollary}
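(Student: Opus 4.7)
The plan is to recycle the reduction constructed in the proof of Lemma~\ref{lem:approxquery2} verbatim, and re-read its guarantees for the two Steiner cycle objectives. Given an instance of R2VDP and a polynomial $p(n)$, I would build the same digraph $G'$ (two copies $G_1$, $G_2$ of $G$, the connecting arcs from node $2$ of $G_1$ to node $1$ of $G_2$ and from node $4$ of $G_2$ to node $3$ of $G_1$, plus the extra edges $(4,1)$ in $G_1$ and $(2,3)$ in $G_2$), assign unit weight to every edge except the arc $(2,3)$ in $G_2$, which receives weight $W = n \cdot p(n) + 1$, and fix node $1$ of $G_1$ as the (only) terminal. The structural fact already established inside that proof is that YES instances of R2VDP produce a Hamiltonian Steiner cycle in $G'$ (which necessarily uses the heavy arc), whereas in NO instances no Steiner cycle through the terminal can use the heavy arc, so every feasible Steiner cycle then consists of at most $2n-2$ unit-weight edges.

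Given this dichotomy, the two objectives are read off directly. For Maximum Steiner Cycle, YES instances attain total weight at least $(2n-1) + W = 2n + n \cdot p(n)$, whereas NO instances attain at most $2n - 2$; the ratio is of order $p(n)$. For Maximum Mean Steiner Cycle, YES instances attain mean weight at least $(2n + n \cdot p(n))/(2n) = 1 + p(n)/2$, whereas in NO instances every usable edge has weight $1$ and the mean is at most $1$; the ratio is again of order $p(n)$. Either gap would let a polynomial-ratio approximation algorithm decide R2VDP in polynomial time, giving $\Pclass = \NP$. To cover arbitrary $k \ge 1$ I would simply designate $k-1$ additional terminals along the Hamiltonian backbone of $G'$: the YES-case Hamiltonian cycle still covers them and the NO-case analysis is unaffected, so the same dichotomy goes through for every fixed $k$.

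The only step that truly needs care is the structural dichotomy (Hamiltonian Steiner cycle in YES instances, heavy-arc-free Steiner cycles in NO instances), but this is precisely what the gadget of Lemma~\ref{lem:approxquery2} was engineered to deliver and does not need to be reproved. Beyond that, the argument reduces to a one-line check that the edge-weight assignment preserves the same polynomial gap under both the sum and the mean objective, which it does by construction since the heavy arc contributes $n \cdot p(n)$ to the numerator while the denominators ($1$ for Max Steiner Cycle, $2n$ for Max Mean Steiner Cycle) are $\bigO(n)$.
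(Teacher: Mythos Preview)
Your proposal is correct and mirrors the paper's own proof, which simply states that the result ``directly follows from the reduction given in Lemma~\ref{lem:approxquery2}''; you have merely spelled out the gap computation for the sum and mean objectives that the paper leaves implicit. Your handling of $k>1$ by adjoining extra terminals is a slightly more explicit variant of the paper's ``larger $k$ only shrinks the feasible set'' remark inside Lemma~\ref{lem:approxquery2}, and both arrive at the same conclusion.
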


\begin{proof}
The results directly follows from the reduction given in Lemma~\ref{lem:approxquery2}. 
\end{proof}

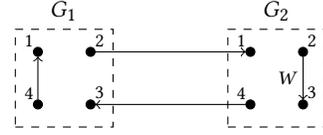
\begin{figure}
\centering
\begin{tikzpicture}[node distance =1.75cm]
\node [draw, fill, circle, scale = 0.4, label={[xshift=-0.12cm, yshift=-0.1cm]\footnotesize$1$}] (1) {};
\node [draw, fill, circle, scale = 0.4, label={[xshift=0.12cm, yshift=-0.1cm]\footnotesize$2$}] (2) [right of  = 1] {};
\node [draw, fill, circle, scale = 0.4, label={[xshift=-0.12cm, yshift=-0.1cm]\footnotesize$4$}] (4) [below of = 1] {};
\node [draw, fill, circle, scale = 0.4, label={[xshift=0.12cm, yshift=-0.1cm]\footnotesize$3$}] (3) [below of = 2] {};

\node [draw, fill, circle, scale = 0.4, label={[xshift=-0.12cm, yshift=-0.1cm]\footnotesize$1$}] (1a) [right = 2cm of 2]{};
\node [draw, fill, circle, scale = 0.4, label={[xshift=0.12cm, yshift=-0.1cm]\footnotesize$2$}] (2a) [right of  = 1a] {};
\node [draw, fill, circle, scale = 0.4, label={[xshift=-0.12cm, yshift=-0.1cm]\footnotesize$4$}] (4a) [below of = 1a] {};
\node [draw, fill, circle, scale = 0.4, label={[xshift=0.12cm, yshift=-0.1cm]\footnotesize$3$}] (3a) [below of = 2a] {};

\draw (0.35cm,-0.35cm) node[rectangle,dashed, minimum height=1.3cm,minimum width=1.3cm,draw,label={[xshift=0cm, yshift=0cm]$G_1$}] {};
\draw (3.175cm,-0.35cm) node[rectangle,dashed, minimum height=1.3cm,minimum width=1.3cm,draw,label={[xshift=0cm, yshift=0cm]$G_2$}] {};

\draw [->] (2) to (1a);
\draw [->] (2a) -- node[midway, above left = -0.2cm and -0.05 cm] {\footnotesize$W$}  (3a);
\draw [->] (4)  to (1);
\draw [->] (4a) to (3);

\end{tikzpicture}
\caption{A visualization of the construction in Lemma~\ref{lem:approxquery2}.\label{fig:approxred}}
\end{figure}

\section{Practical Algorithms}
\label{sec:algorithms}
\subsection{Algorithms for MSIC}\label{sec:methmsic}
\subsubsection{Karp's Algorithm}
Due to the \NP-hardness of our problem, we resort to the \emph{maximum mean cycle} as an approximate solution to \msic. We first note that the maximum mean cycle in a graph $G$ is equivalent to the minimum mean cycle in the graph $G'$ obtained by reversing the sign of the edge weights of $G$. 

The problem of finding the \emph{minimum mean cycle} (\mmc) in a graph with real-valued edge weights is well-studied in the literature and admits efficient polynomial algorithms as shown by Karp~\citep{KARP}. Karp's \mmc algorithm runs in $\Theta(nm)$ time on \emph{any} instance. As noted by \citet{Dasdan}, there are other algorithms, with worse theoretical bounds, performing significantly better in practice, such as, Howard's algorithm \citep{howard:dp} and Young's algorithm \citep{Young1991FasterPS}. \citet{Dasdan} have given excellent survey of the different algorithms and their performance in practice. 

In this paper we use Karp's \mmc algorithm, not only due to its ease of implementation but also it still holds one of the best asymptotic running times. 

We briefly review Karp's algorithm for completeness. First, recall that in the \mmc problem we are given a directed graph $G=(V,E)$
and an arbitrary edge-weight function  $w: E \to \mathbb{R}$, and the goal is to find a cycle $C$ in $G$ that minimizes the average weight, i.e., 
$\rho^* = \min_{\{C \text{ cycle of } G\}}\left\{  \frac{\sum_{e\in C}w(e)}{|C|} \right\}$.
Karp provided the following elegant characterization for the weight of the minimum mean-cycle:
\begin{equation}
\label{equation:Karp}
\rho^* = \min_{v\in V} \max_{1\le k \le n}
\frac{D_n(v) - D_k(v)}{n-k},
\end{equation}
where $D_k(v)$ is the minimum weight of an edge progression\footnote{Both edges and nodes may be repeated.} of length $k$ from an arbitrary source $s$,
while $D_k(v)=+\infty$ if no such path exists. It is assumed that all nodes are reachable from $s$.
The algorithm computes $\rho^*$ using Equation~(\ref{equation:Karp})
after computing the values $D_k(v)$ for all $v\in V$ and $k=1,\ldots,n$ 
via the recurrence
$D_k(v)=\min_{(u,v)\in E}\{D_{k-1}(u) + w(u,v) \}$, 
initialized with $D_0(s)=0$ and $D_0(v)=+\infty$ for $v\not= s$.
The actual cycle can be extracted by traversing the edge progression $D_n(v)$, for the node $v$ that minimizes (\ref{equation:Karp}).

Obviously, the \mmc and \msic problems optimize different objectives, 
however, we can use Karp's algorithm as a heuristic for the \msic problem.
We can show that the cycle with maximum mean weight provides a $\bigO(n)$-approximation 
for \msic in arbitrary graphs with non-negative edge weights.

\begin{lemma}
\label{karpapp}
Karp's \mmc algorithm~\cite{KARP} provides a $\bigO(n)$-approximation for \msic in arbitrary graphs with non-negative edge weights. 
\end{lemma}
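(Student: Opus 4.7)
The plan is to compare Karp's output $C_K$ (a maximum mean cycle) with an optimal solution $C_F$ to \msic by factoring the objective $F$ through the mean-weight objective. Specifically, I would write $F(C) = K(C)\cdot \phi(|C|)$, where $K(C) = \tfrac{1}{|C|}\sum_{e\in C} w(e)$ is the mean weight and $\phi(\ell) = \tfrac{\ell}{\alpha\ell + n\beta}$. Note that $\phi$ is monotonically increasing in $\ell$. This decomposition cleanly separates the ``mean'' component (which Karp's algorithm optimizes exactly) from the ``length'' component (which it ignores).

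Next, I would exploit the optimality of $C_K$ for the mean-weight objective, giving $K(C_K) \ge K(C_F)$. Combined with the decomposition, this yields
\begin{equation*}
\frac{F(C_F)}{F(C_K)} \;=\; \frac{K(C_F)}{K(C_K)}\cdot \frac{\phi(|C_F|)}{\phi(|C_K|)} \;\le\; \frac{\phi(|C_F|)}{\phi(|C_K|)} \;=\; \frac{|C_F|\bigl(\alpha|C_K| + n\beta\bigr)}{|C_K|\bigl(\alpha|C_F| + n\beta\bigr)}.
\end{equation*}
Since any simple cycle has length in $[2,n]$, the right-hand side is maximized (worst case) by taking $|C_F|=n$ and $|C_K|=2$, giving the bound $\tfrac{n(2\alpha+n\beta)}{2(n\alpha+n\beta)} = \tfrac{2\alpha + n\beta}{2(\alpha+\beta)} = \bigO(n)$ for fixed $\alpha,\beta>0$. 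This establishes the claim.

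The main subtlety, rather than a hard obstacle, is being careful that the inequality $K(C_K)\ge K(C_F)$ goes in the correct direction in a \emph{maximization} setting (Karp's algorithm is typically stated for the minimum mean cycle, and in Section~\ref{sec:methmsic} we invoke it on the sign-reversed graph, so the reduction gives a maximum mean cycle on non-negative weights). One should also verify non-degeneracy: if $C_F = \emptyset$ then $F(C_F)=0$ and the ratio is trivial; otherwise $|C_F|\ge 2$ and $K(C_K) > 0$, so the division in the bound above is well-defined. It is also worth remarking, as a sanity check that motivates Karp's use in Section~\ref{sec:methmsic}, that in the practical regime $\alpha \gg \beta$ the bound $\tfrac{2\alpha+n\beta}{2(\alpha+\beta)}$ is close to $1$, so the $\bigO(n)$ worst case is pessimistic and the heuristic is expected to behave much better.
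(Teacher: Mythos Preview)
Your proof is correct and follows essentially the same approach as the paper: factor the objective into mean weight times a length-dependent term, use the optimality of $C_K$ for the mean-weight part to bound the ratio by $\phi(|C_F|)/\phi(|C_K|)$, and then apply the cycle-length bounds $2\le |C|\le n$ to obtain $\tfrac{2\alpha+n\beta}{2(\alpha+\beta)}$, which is exactly the paper's bound $\tfrac{\alpha+n\beta/2}{\alpha+\beta}$. The only difference is cosmetic---you name the decomposition $F(C)=K(C)\cdot\phi(|C|)$ explicitly, whereas the paper carries the sums through and simplifies at the end.
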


\begin{proof}
Given a directed graph $G=(V,E)$ with non-negative weights, let $C_K$ denote the cycle with maximum mean weight and let $C^*$ denote the optimal solution to \msic. Then, by using the fact that 
$\sum_{e \in C_K} w(e) / {|C_K|} \ge \sum_{e \in C^*} w(e)/{|C^*|}$, and that $2 \le |C| \le n$ for any cycle $C$, we obtain
\begin{align}
\label{eq:boundkarp}
&\dfrac{F(C^*)}{F(C_K)} = \dfrac{\sum_{e \in C^*} w(e)}{\alpha |C^*| + n\beta } \cdot \dfrac{\alpha |C_K| + n\beta }{\sum_{e \in C_K} w(e)} \notag\\ 
&\le \dfrac{\sum_{e \in C_K} w(e) \cdot \frac{|C^*|}{|C_K|}}{\alpha |C^*| + n\beta } \cdot \dfrac{\alpha |C_K| + n\beta }{\sum_{e \in C_K} w(e)} \notag\\ 
&= \dfrac{\alpha + n\beta  / |C_K|}{\alpha + n\beta  / |C^*|} \le \dfrac{\alpha + n\beta  / 2}{\alpha + \beta}
\end{align}
\end{proof}

In Section~\ref{sec:experiments} we show that the bound (\ref{eq:boundkarp}) is quite good in practice, as long as the parameter $q$, i.e., the expected probability that a random node is part of a cycle, is small. Roughly put, small indicates not more than the density of the network. Notice that, as $q$ increases, the value of $\beta$ also increases, see (\ref{eq:alphaBeta}), and thus it becomes more likely that the optimal solution to \msic is the longest cycle in the graph, which is hard to approximate. 

\subsubsection{A variant of Karp's algorithm}\label{sec:algo:var}
Although efficient, a direct application of Karp's algorithm to solve \msic disregards the information about the parameters $\alpha$ and $\beta$. Thus, we propose a natural extension of Karp's algoritm that incorporates the role of the parameters $\alpha$ and $\beta$ aligned with the objective function of \msic. To this end, we modify Karp's algorithm to find the node $v$ that minimizes (on the edge-signs reversed graph $G'$) the following: 
\begin{equation}
\label{equation:Karpvar}
\min_{v\in V} \max_{1\le k \le n}
\frac{D_n(v) - D_k(v)}{\alpha(n-k)+n\beta}.
\end{equation}
Notice that, as in Karp's characterization, the numerator in (\ref{equation:Karpvar}) mimics the weight of a cycle of length $(n-k)$ found for each $v \in V$, so (\ref{equation:Karpvar}) operates with the objective function of \msic. Similar to Karp's algorithm, this algorithm runs in $\Theta(nm)$ time and the cycle for the minimizer $v$ can be found by traversing the edge progression $D_n(v)$. 


\subsection{Algorithms for $k$-MSIC}\label{sec:methkmsic}
The \kmsic problem is reminiscent of Steiner cycle problems, thus, one could consider the solutions of related problems, such as  maximum mean Steiner cycle (\mmscp), for approximating \kmsic. However, as we have shown in Section~\ref{sec:theory}, besides being \NP-hard, both problems cannot be approximated within a ratio that is polynomial in the number of nodes. 
Existing algorithms for approximating Steiner cycle problem variants are less well-known, and in most cases these algorithms have strict requirements as we review next.

\citet{Ste10} proposed a $\frac{3}{2}\log_2(k)$-approximation algorithm for the {\em mininum} Steiner cycle problem 
on $k$ terminal nodes in non-negatively weighted graphs in which the edge weights satisfy the triangle inequality.
We note that not only the instances of \kmsic do not necessarily satisfy the triangle inequality but also we study a maximization problem.


\citet{SALAZARSteinercycle} introduced a {\em minimum} Steiner cycle problem variant with node penalties and considered a 0-1 integer linear program examining the Steiner cycle polytope. Besides having a different context, their method is of theoretical interest that doesn't translate into practical algorithms for \kmsic. 

\citet{Kanellakis} propose a \emph{local search} method for directed TSP, extending the Lin-Kernighan heuristic proposed for undirected TSP~\citep{Lin}. We adopt the local search approach proposed by \citet{Kanellakis} for directed TSP and extend their techniques for finding Steiner cycles of interest. We will refer to our local search heuristic for \kmsic as \localSC. 


The local search method by \citet{Kanellakis} starts with a random initial solution then considers the so-called ``sequential primary'' and ``quad'' changes. 
In a sequential primary change, three edges $(a,b)$, $(c,d)$, and $(e,f)$, encountered in this order on the cycle, 
are removed from the cycle, and the edges $(a,d)$, $(c,f)$ and $(e,b)$ are added. 
In a quad change, the rewiring consists of removing four edges and reconnecting opposite edges, 
as shown in Figure~\ref{fig:changes}(b). The neighborhood of each step in their local search consists of a cost-dependent subset, determined by a number of heuristic rules. 
The search stops when no significant improvements can be made. 

When transforming this search from a TSP setting to a Steiner cycle setting, a few adjustments have to be made. 
Besides the primary and quad change, we propose two new changes in \localSC. 
The {\em shortcutting change} shortcuts the initial solution into a smaller Steiner cycle. 
The {\em extending change} bypasses an edge in a Steiner cycle, by replacing the edge with two new edges. A visualization of all the changes considered by \localSC are provided in Figure~\ref{fig:changes}. 

Given a set $Q$ of $k$ terminal nodes and an upper bound $\lenmax \ge k$ on the cycle length, \localSC finds an initial Steiner cycle of $G$ as follows: 
\begin{enumerate}
\item Prune $G$ by only considering nodes $v \in V$ s.t. 
\begin{equation*}
\forall q \in Q: \, \ell(q \rightsquigarrow v)+\ell(v \rightsquigarrow q) \leq \lenmax,
\end{equation*}
where $\ell(\cdot)$ denotes the (unweighted) shortest path length. This step can be performed in time $\bigO(k(n+m))$.
\item Run a randomized depth-first search to find an initial valid Steiner cycle. The search is guided by a heuristic, and each $v$ that has a low total distance 
towards all query nodes has a higher chance of being explored first, i.e., at any time in the depth-first search, 
the probability that a node $v$ is chosen from the stack is proportional to 
$1/\sum_{q \in Q} \ell(v \rightsquigarrow q)$.
\end{enumerate}
After \localSC finds an initial Steiner cycle, a sequence of changes depicted Figure~\ref{fig:changes} are applied. When considering a type of change, \localSC always selects the one that yields the largest improvement to the objective function (\ref{fobjexp}). \localSC first applies a number of extending changes to the initially found cycle until the cycle length is equal to $\lenmax$. 
Then, \localSC greedily keeps selecting the best change among the sequantial, quad, or shortcutting changes until no improvements can be made.
If \localSC doesn't return a solution, then no Steiner of length at most $\lenmax$ exists for the given $k$ terminal nodes. The idea is to run this randomized procedure a couple of times (1-5 in the experiments), and pick the best solution. 

Unlike the method of \citet{Kanellakis}, a neighborhood in our local search will consist of \emph{all} the possible changes. For a Steiner cycle of length $\lenmax$, there are $\bigO({\lenmax^2})$ shortcutting changes, $\bigO(n \cdot \lenmax)$ extending changes, $\bigO({\lenmax^3})$ primary changes and $\bigO({\lenmax^4})$ quad changes, which is feasible to evaluate for a reasonable upper bound $\lenmax$.

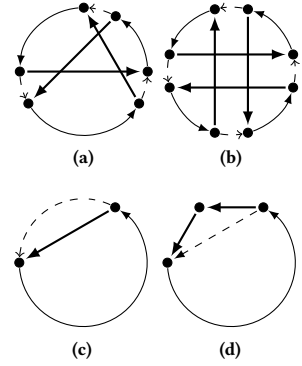
\begin{figure}
\captionsetup[subfigure]{}
\centering

\subfloat[]{
\begin{tikzpicture}[node distance =1.75cm]
\def \margin {5}
\def \radius {0.85cm}

\node[draw, fill, circle, scale = 0.4] at (90:\radius) {};
\node[draw, fill, circle, scale = 0.4] at (60:\radius) {};
\node[draw, fill, circle, scale = 0.4] at (0:\radius) {};
\node[draw, fill, circle, scale = 0.4] at (-30:\radius) {};
\node[draw, fill, circle, scale = 0.4] at (180:\radius) {};
\node[draw, fill, circle, scale = 0.4] at (210:\radius) {};

\draw[->, >=latex] (\margin:\radius) 
    arc (\margin:60-\margin:\radius);
\draw[->, >=latex] (90+\margin:\radius) 
    arc (90+\margin:180-\margin:\radius);
\draw[->, >=latex] (210+\margin:\radius) 
    arc (210+\margin:330-\margin:\radius);

\draw[->, dashed] (60+\margin:\radius) 
    arc (60+\margin:90-\margin:\radius);
\draw[->, dashed] (180+\margin:\radius) 
    arc (180+\margin:210-\margin:\radius);
\draw[->,, dashed] (330+\margin:\radius) 
    arc (330+\margin:360-\margin:\radius);

\draw[shorten >=0.1cm ,shorten <=0.1cm,->,>=latex,thick] (180:\radius) -- (0:\radius);
\draw[shorten >=0.1cm ,shorten <=0.1cm,->,>=latex,thick] (330:\radius) -- (90:\radius);
\draw[shorten >=0.1cm ,shorten <=0.1cm,->,>=latex,thick] (60:\radius) -- (210:\radius);
\end{tikzpicture}
}
\subfloat[]{
\begin{tikzpicture}[node distance =1.75cm]
\def \margin {5}
\def \radius {0.85cm}

\node[draw, fill, circle, scale = 0.4] at (-15:\radius) {};
\node[draw, fill, circle, scale = 0.4] at (15:\radius) {};
\node[draw, fill, circle, scale = 0.4] at (75:\radius) {};
\node[draw, fill, circle, scale = 0.4] at (105:\radius) {};
\node[draw, fill, circle, scale = 0.4] at (165:\radius) {};
\node[draw, fill, circle, scale = 0.4] at (195:\radius) {};
\node[draw, fill, circle, scale = 0.4] at (255:\radius) {};
\node[draw, fill, circle, scale = 0.4] at (285:\radius) {};

\draw[shorten >=0.1cm ,shorten <=0.1cm,->,>=latex,thick] (165:\radius) -- (15:\radius);
\draw[shorten >=0.1cm ,shorten <=0.1cm,->,>=latex,thick] (-15:\radius) -- (195:\radius);

\draw[shorten >=0.1cm ,shorten <=0.1cm,->,>=latex,thick] (255:\radius) -- (105:\radius);
\draw[shorten >=0.1cm ,shorten <=0.1cm,->,>=latex,thick] (75:\radius) -- (285:\radius);

\draw[->, >=latex] (15+\margin:\radius) 
    arc (15+\margin:75-\margin:\radius);
\draw[->, >=latex] (105+\margin:\radius) 
    arc (105+\margin:165-\margin:\radius);
\draw[->, >=latex] (195+\margin:\radius) 
    arc (195+\margin:255-\margin:\radius);
\draw[->, >=latex] (285+\margin:\radius) 
    arc (285+\margin:345-\margin:\radius);

\draw[->, dashed] (345+\margin:\radius) 
    arc (345+\margin:375-\margin:\radius);
\draw[->, dashed] (75+\margin:\radius) 
    arc (75+\margin:105-\margin:\radius);
\draw[->, dashed] (165+\margin:\radius) 
    arc (165+\margin:195-\margin:\radius);
\draw[->, dashed] (255+\margin:\radius) 
    arc (255+\margin:285-\margin:\radius);

\end{tikzpicture}
}

\subfloat[]{
\begin{tikzpicture}[node distance =1.75cm]
\def \margin {5}
\def \radius {0.85cm}

\node[draw, fill, circle, scale = 0.4] at (60:\radius) {};
\node[draw, fill, circle, scale = 0.4] at (180:\radius) {};

\draw[->, >=latex] (180+\margin:\radius) 
    arc (180+\margin:420-\margin:\radius);
\draw[->, dashed] (90+\margin:\radius) 
    arc (90+\margin:180-\margin:\radius);

\draw[-, dashed] (60+\margin:\radius) 
    arc (60+\margin:90-\margin:\radius);

\draw[shorten >=0.1cm ,shorten <=0.1cm,->,>=latex,thick] (60:\radius) -- (180:\radius);

\end{tikzpicture}
}
\subfloat[]{
\begin{tikzpicture}[node distance =1.75cm]
\def \margin {5}
\def \radius {0.85cm}

\node[draw, fill, circle, scale = 0.4] at (60:\radius) {};
\node[draw, fill, circle, scale = 0.4] at (180:\radius) {};
\node[draw, fill, circle, scale = 0.4] at (120:\radius) {};

\draw[->, >=latex] (180+\margin:\radius) 
    arc (180+\margin:420-\margin:\radius);

\draw[shorten >=0.1cm ,shorten <=0.1cm,->,>=latex,dashed] (60:\radius) -- (180:\radius);
\draw[shorten >=0.1cm ,shorten <=0.1cm,->,>=latex,thick] (60:\radius) -- (120:\radius);
\draw[shorten >=0.1cm ,shorten <=0.1cm,->,>=latex,thick] (120:\radius) -- (180:\radius);

\end{tikzpicture}
}
\caption{(a) Sequential primary change (b) Quad change (c) Shortcutting change (d) Extending change.\label{fig:changes}}
\end{figure}

\section{Experiments}
\label{sec:experiments}
The goal of this section is manifold. First, we would like to evaluate the quality of solutions obtained by Karp's \mmc algorithm, the variant from Section~\ref{sec:algo:var} and our local Steiner cycle search heuristic \localSC.
To this end, we conduct experiments on small synthetic datasets and compare the subjective interestingness of the approximate solutions against the optimal solutions that we obtain by exhaustive search in these small instances.
Second, we would like to evaluate the efficiency and scalability (how often do we find a cycle, and how fast) of \localSC on real-word datasets.
Finally, we provide two practical use cases. Our Python and Matlab code is publicly available at \url{https://bitbucket.org/ghentdatascience/interesting-cycles-public}. 

\begin{table}[h]
\caption{\small Network statistics.\label{table:stats}}
\resizebox{\columnwidth}{!}{%
\begin{tabular}{lrrl}
\hline
Dataset & |V| & |E| & Edge Weights  \\
\hline
Food web\tablefootnote{\url{http://vlado.fmf.uni-lj.si/pub/networks/data/bio/foodweb/foodweb.htm}} & $128$ & $2\,106$ &{Carbon exchange in the Florida Bay Area}\\
Trade\tablefootnote{\url{https://wits.worldbank.org/}}  & $221$ &$ 1\,957$ &{Country top 5 import \& export in 2018}\\
Gnutella\tablefootnote{\url{https://snap.stanford.edu/data/p2p-Gnutella31.html}} &$26\,518$ & $65\,369$ & {Connections between hosts in a p2p-network}\\
Enron\tablefootnote{\url{https://snap.stanford.edu/data/email-Enron.html}}& $36\,692$ &$ 183\,831$ &{Email communication network}\\
\hline
\end{tabular}%
}
\end{table}


\subsection{Quality experiments on synthetic datasets}
In this section we evaluate the quality of solutions obtained by the algorithms for \msic, and our local Steiner cycle search heuristic \kmsic, using various choices of $\alpha$ and $\beta$. This requires to exhaustively search for their optimal solutions, by enumerating all the cycles using Johnson's algorithm \citep{Johnson} that runs in $\bigO((n+m)(c+1))$ time, where $c$ is the total number of cycles in the input graph. To keep the exhaustive search feasible, we perform the quality tests on small instances and generated $200$ random Erd\H{o}s-R\'{e}nyi graphs with $n = 20$ and edge probability $0.2$. Even in such small instances, we found an average of $218,080$ cycles per instance, with the maximum number of cycles found in an instance being more than $5$ million. 
We set the weight of each edge to a random integer that is generated uniformly at random from the interval $[1, 10K]$. 

We start by evaluating the quality of solutions obtained by Karp's algorithm and its variant for \msic. We use varying values of $\alpha$ and $\beta$ obtained by evaluating (\ref{eq:alphaBeta}) for $q \in \{0.1, 0.2, 0.3\}$. Figure~\ref{fig:expKarp} shows the relative performance w.r.t. the optimal solution for different values of $q$ over $200$ random Erd\H{o}s-R\'{e}nyi instances, sorted from worst to best performing. 
In order to have a baseline, we compute the average interestingness over all possible cycles that were encountered in the 200 instances.
The influence of the parameter $q$ is clearly visible. For $q=0.1$, Karp's algorithm provides the optimal cycle in about $10\%$ of the instances, and has a performance ratio of at least $0.75$ in the rest of the instances. For $q=0.2$, Karp's algorithm provides the optimal cycle only in $5$ instances, with a performance ratio of at least $0.5$ overall. For $q=0.3$, the performance ratio drops drastically as expected, since, the optimal cycle corresponds more to the longest weighted cycle, while Karp's algorithm provides the cycle with maximum mean weight.
Interestingly, the variant from Section~\ref{sec:algo:var} performs slightly worse overall than Karp's algorithm. However, we report that in a small number of instances it performed significantly better than Karp's algorithm although this trend didn't generalize.  
As a guideline, we advise to set $q$ to be not larger than the density of the network (which is 0.2 in this case).

\begin{figure}
\centering
\includegraphics[width=0.35\textwidth]{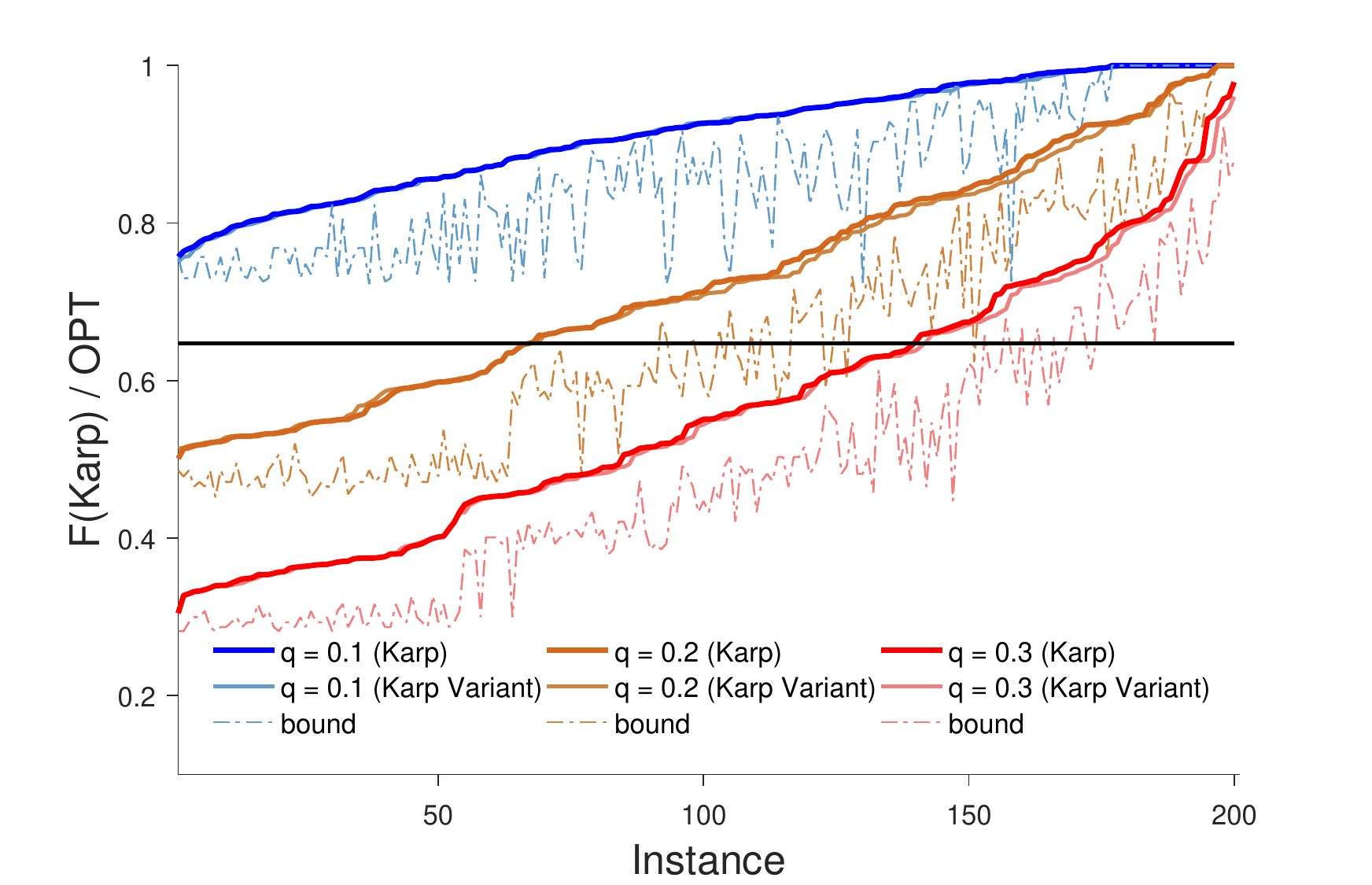} 
\caption{\small Relative performance of Karp's \mmc \& Karp's Variant for various $q$. The dashed lines indicate the theoretical bound provided in Lemma~\ref{karpapp}. \label{fig:expKarp}}
\end{figure}

Next we evaluate the quality of solutions obtained by \localSC for \kmsic. 
We set $q=0.05$ and randomly pick $k$ terminal nodes, for $k \in \{1,5,10\}$. We set no upper bound on the maximum cycle length, i.e., $\lenmax = 20$, run the algorithm $5$ times, and pick the best solution.
Relative performance is shown in Figure~\ref{fig:expSteiner}. Instances in the $x$-axis are again sorted from worse to best performing. The dashed lines indicate the best value of an initial Steiner cycle that was found in the 5 tries, clearly showing that the sequence of changes proposed in Section~\ref{sec:methkmsic} improve the score by a good amount. We also observed that \localSC didn't find any Steiner cycle in $55$ out of $200$ instances for $k = 10$, while this number was $25$ for $k=5$ and $8$ for $k=1$. 
The increase in the performance for larger $k$ is mainly due to the fact that there are more possible local changes available to perform on an initially found cycle for higher $k$, provided that a Steiner cycle of length at most $\lenmax$ exists.

We analyze the running time of \localSC in two different settings, see Figure~\ref{fig:kandn}.
First, we generate Erd\H{o}s graphs of size $n=20$ with edge probability 0.2, set no bound on $\lenmax$, and let the query size $k$ vary. For each $k$, we generate 50 graphs and repeat the algorithm one time.
As expected, for fixed $n$ and $m$, the running time is linear in $k$.
Second, we set $k=3$, $\lenmax=10$ and let the graph size $n$ vary. Again for each $n$, we generate 50 instances.
As expected, there is a polynomial dependence on the graph size $n$; doubling the graph size $n$ roughly leads to a quadrupling in running time.
Karps's MMC and Karp's Variant \emph{always} run in $\Theta(nm)$ time, hence are not tested. Their space complexity is given by $\Theta(n^2)$.

\begin{figure}
\centering
\includegraphics[width=0.35\textwidth]{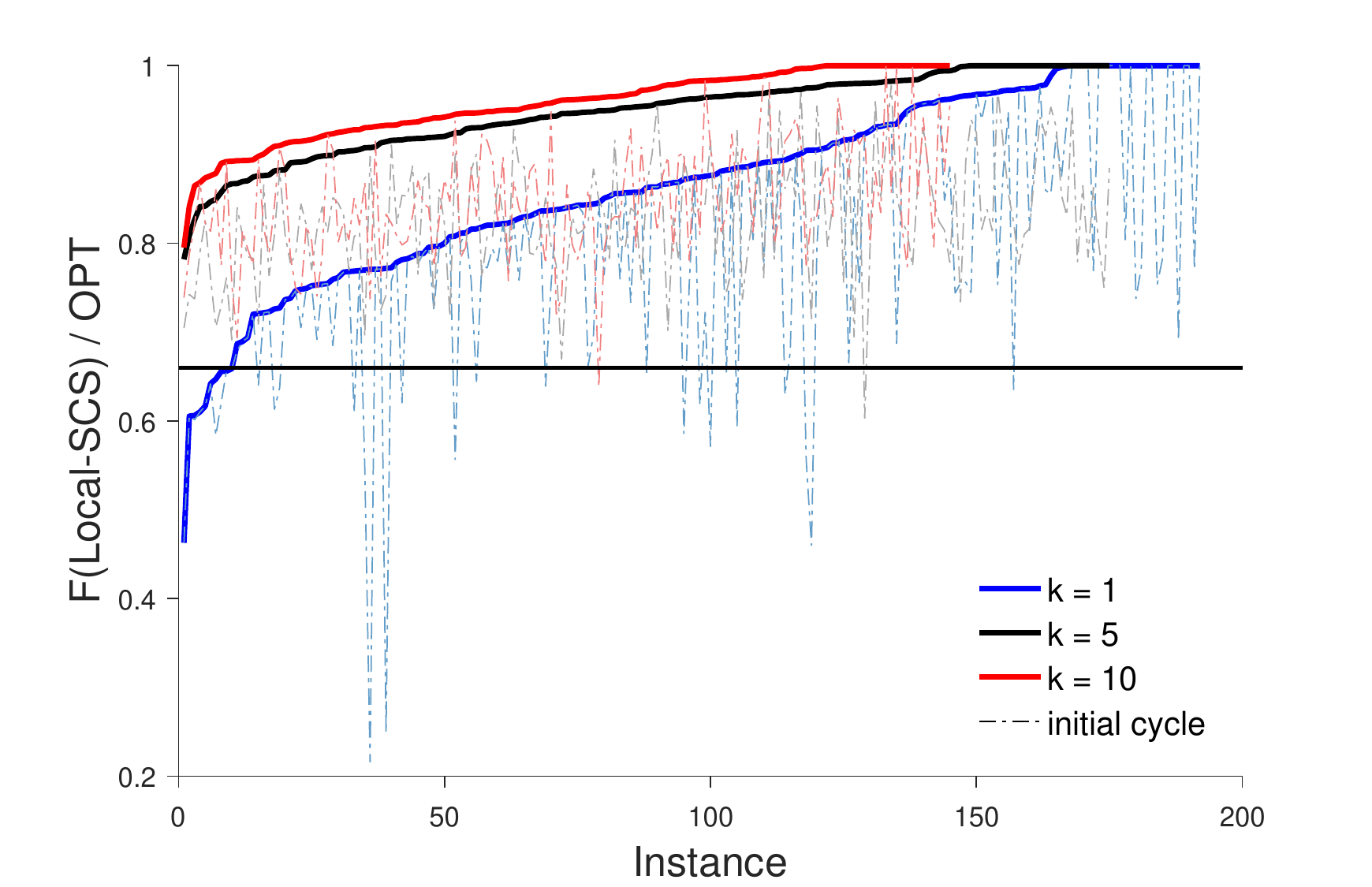}
\caption{\small Relative performance of \localSC for various $k$ and $q=0.05$. The dashed lines indicate the best initial solutions before applying changes.\label{fig:expSteiner}}
\end{figure}

\begin{figure}[tp]
\captionsetup[subfloat]{farskip=5pt,captionskip=1pt,labelformat=empty}
\centering
\subfloat{\includegraphics[width=.47\linewidth]{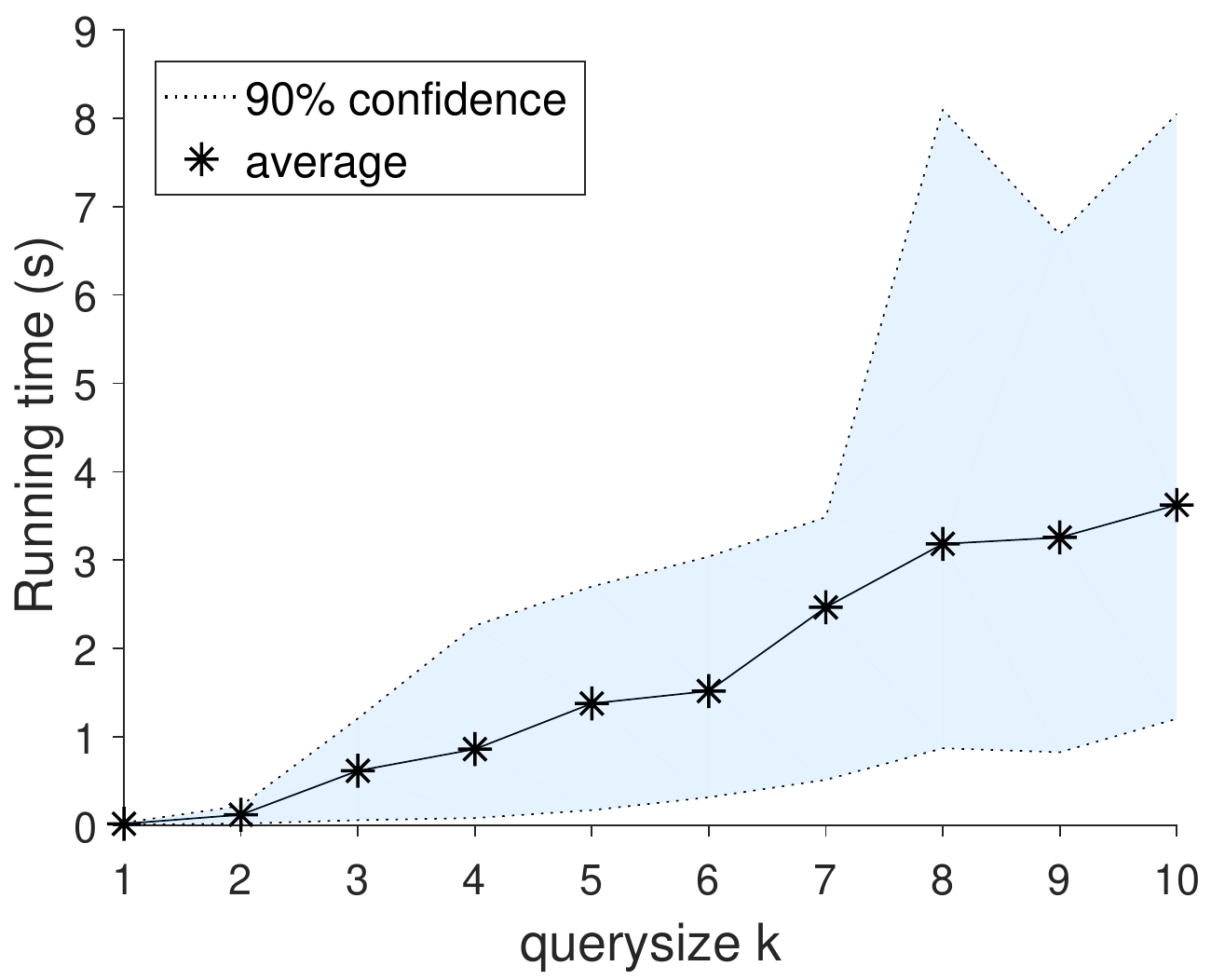}}
\subfloat{\includegraphics[width=.5\linewidth]{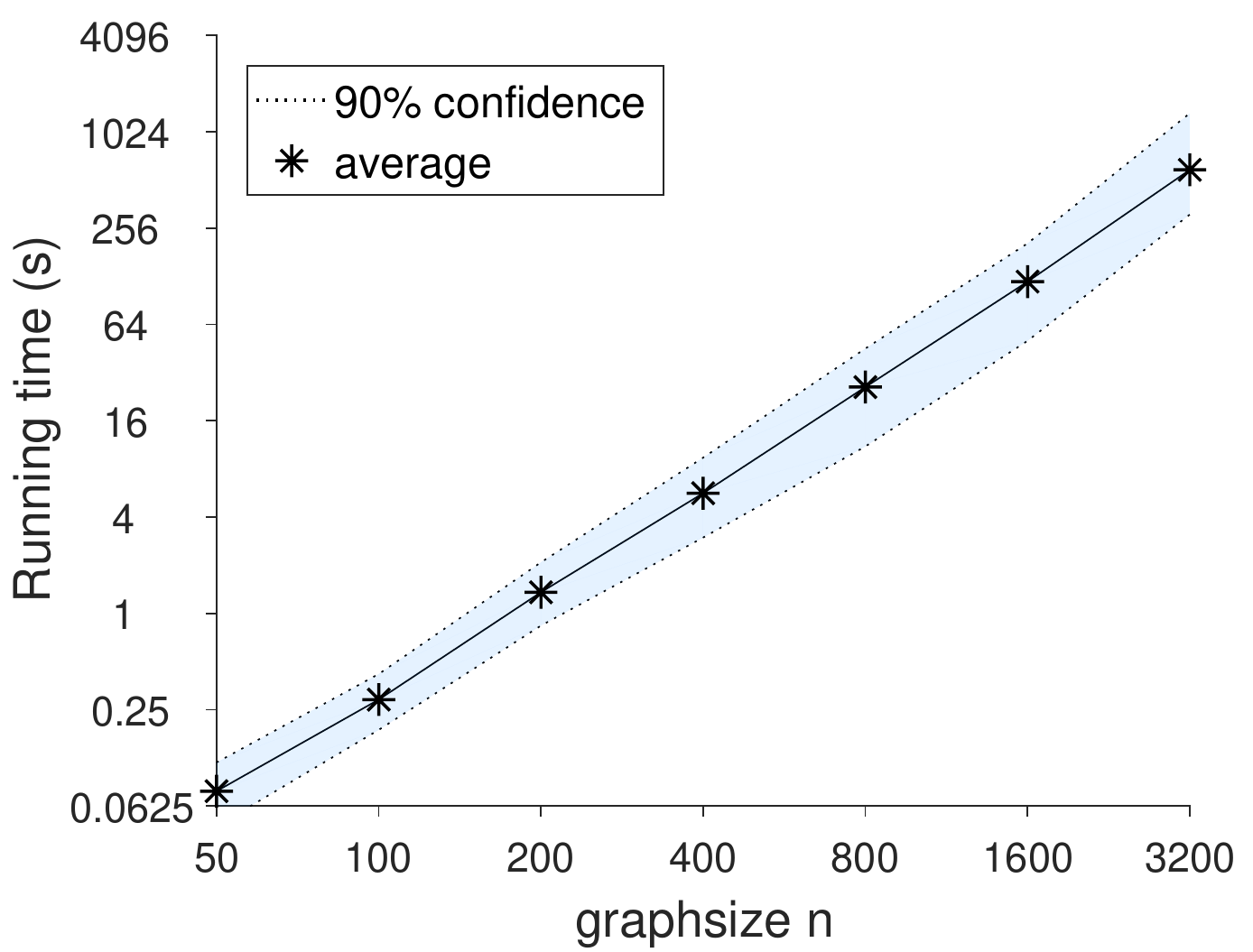}}
\caption{Running times of \localSC on Erd\H{o}s graphs: on the left for varying query size $k$, on the right for varying graph size $n$. \label{fig:kandn}}
\end{figure}

\begin{figure*}
\centering
\includegraphics[width=0.80\textwidth]{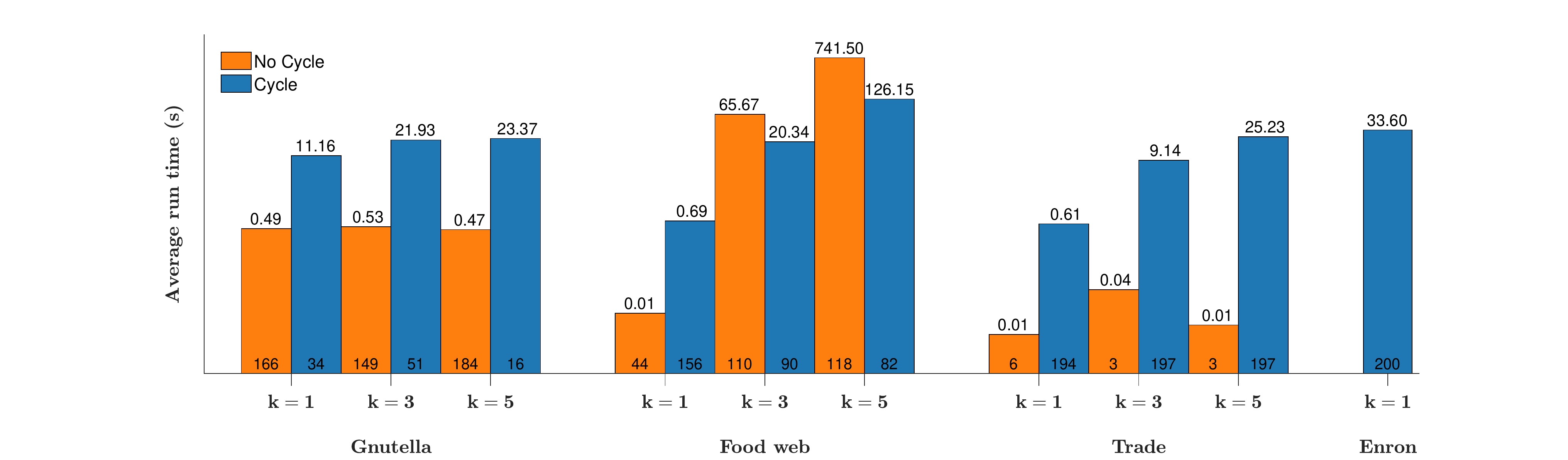}
\caption{\small Average running time of \localSC on real-world datasets.\label{fig:timing}}
\end{figure*} 

\subsection{Scalability on real-world datasets}
Contrary to the dense Erd\H{o}s graphs, it is interesting to test scalability on (sparser) real-world datasets.
Indeed, often it might be the case that we don't find a Steiner cycle at all.
We test on datasets whose basic statistics are summarized in Table~\ref{table:stats}. 
For each dataset and for each $k \in \{1,3,5\}$, we generated a set $Q_i$ of $k$ random terminal nodes, for $i \in [1, 200]$. 
Each $Q_i$ is generated by a snowball sampling scheme, i.e., after choosing an initial terminal node into $Q_i$ at random, each of its neighbors is chosen into $Q_i$ with probability $p$ until $k$ terminals are obtained. We set $p = 0.4$ in our experiments. We set $ \lenmax = 10$ for the local search and we run the algorithm only once for $q=0.01$. Figure~\ref{fig:timing} shows the average running time of \localSC to find an interesting Steiner cycle of length at most $10$ or return none if such a Steiner cycle doesn't exist. The numbers above the $x$-axis (inside the bars) denote how many times a solution was found or not. For Enron, we only report the results for $k=1$ since for $k=3$, the algorithm couldn't terminate within 24 hours for some query sets. 
For Gnutella, $k$ doesn't seem to influence the running time which is expected since the network is sparse with $m$ being roughly $2$ times $n$. Indeed, in most cases \localSC didn't find a solution.
\subsection{Practical use cases}
We discuss two practical use cases for discovering interesting cycles: food trajectories and financial trade data.
\subsubsection{Food Web dataset}
The Florida Bay Food Web dataset \citep{floridabay} contains information about carbon exchange between species in the Florida Bay Area.
The dataset consists of $128$ species and $2106$ edges. An edge weight is a snapshot of the amount of biomass transferred between the species within a fixed period of time. 
Besides a wide variety of organisms, ranging from microorganisms to sharks, it also contains special nodes such as ``input'', ``output'' or ``particulate
organic carbon (POC)''. The extinction of certain species could have a severe impact on the diet of the other organisms in the food chain, possibly leading to a sensitive cascading effect. Thus, in a food web, we consider cycles as interesting based on their vulnerability to extinction. Using the biomass edge weights, we fit a geometric background distribution on the network. Given such a model, an edge $(a,b)$ is usually more informative if the weight of the edge is either a large part of the total incoming weight of node $b$, or a large part of the outgoing weight of node $a$ (or both).

Figure~\ref{fig:foodweb1}(a)-(c) shows the top 3 results after iteratively mining the most interesting cycles using the \mmc algorithm. After a cycle is shown to the user, all edges in the cycle are known to the user, hence, convey zero information. This amounts to setting the information content of those edges equal to zero, after which the algorithm can be applied again.
The top 3 resulting cycles all contain edges that are quite vulnerable to extinction. For e.g., the diet of the ``Filefishes'' consists of 47\% of ``Echinoderma'', a fungus species, which means that a disease in the population of ``Echinoderma' will most likely have a significant impact on the population of ``Filefishes''. This fraction is quite high compared to the expected fraction on an edge being 6\% in the network. 
Note that the top 2 cycles are cannibalism cycles: it contains a predatory subspecies eating its own kind.
Figure~\ref{fig:foodweb1}(d) shows the results when we query a cycle containing both ``Snook'' and ``Crocodile'', for $q=0.1$ and $\lenmax =6$.
This cycle also contains a link between the ``Omnivorous Crab'' and the ``Atlantic Blue Crab'', a highly interesting connection that was also found in Figure~\ref{fig:foodweb1}(a).

\begin{figure*}[t!]
\begin{tabular}{cccc}
\begin{tikzpicture}[node distance =1.75cm, every text node part/.style={align=center}]
\label{fig:fooda}
\def \margin {5}
\def \radius {0.9cm}
\def \rm {0.2cm}

\node[draw, fill, circle, scale = 0.4, label={[xshift=0cm, yshift=0cm]\footnotesize Atlantic \\  \footnotesize Blue Crab}] at (90:\radius) {};
\node[draw, fill, circle, scale = 0.4, label={[xshift=0.8cm, yshift=-.6cm]\footnotesize Omnivorous \\  \footnotesize Crabs}] at (330:\radius) {};
\node[draw, fill, circle, scale = 0.4, label={[xshift=-0.4cm, yshift=-.6cm]\footnotesize Benthic \\ \footnotesize POC}] at (210:\radius) {};

\node[fill, circle, scale = 0, label={[xshift=-0.1cm, yshift=-0.2cm]\footnotesize 67\%}] at (30:\radius-\rm) {};
\node[fill, circle, scale = 0, label={[xshift=0.1cm, yshift=-0.2cm]\footnotesize <1\%}] at (150:\radius-\rm) {};
\node[fill, circle, scale = 0, label={[xshift=0cm, yshift=-0.2cm]\footnotesize 35\%}] at (270:\radius-\rm) {};

\draw[->, >=latex] (-30+\margin:\radius) 
    arc (-30+\margin:90-\margin:\radius);
\draw[->, >=latex] (90+\margin:\radius) 
    arc (90+\margin:210-\margin:\radius);
\draw[->, >=latex] (210+\margin:\radius) 
    arc (210+\margin:330-\margin:\radius);

\end{tikzpicture}
& \hspace{-2mm}
\begin{tikzpicture}[node distance =1cm, every text node part/.style={align=center}]
\label{fig:foodc}
\def \margin {5}
\def \radius {0.9cm}
\def \rm {0.2cm}

\node[draw, fill, circle, scale = 0.4, label={[xshift=0.5cm, yshift=-0.2cm]\footnotesize Benthic \\ \footnotesize POC}] at (45/2:\radius) {};
\node[draw, fill, circle, scale = 0.4, label={[xshift=0.6cm, yshift=-.1cm]\footnotesize Detritivorous \\  \footnotesize Amphipods}] at (45+45/2:\radius) {};
\node[draw, fill, circle, scale = 0.4, label={[xshift=-0.5cm, yshift=0cm]\footnotesize Other  \\ \footnotesize Cnidaridae}] at (2*45+45/2:\radius) {};
\node[draw, fill, circle, scale = 0.4, label={[xshift=-0.8cm, yshift=-0.1cm]\footnotesize Echinoderma}] at (3*45+45/2:\radius) {};
\node[draw, fill, circle, scale = 0.4, label={[xshift=-0.7cm, yshift=-0.2cm]\footnotesize Filefishes}] at (4*45+45/2:\radius) {};
\node[draw, fill, circle, scale = 0.4, label={[xshift=-0.4cm, yshift=-0.5cm]\footnotesize Water POC}] at (5*45+45/2:\radius) {};
\node[draw, fill, circle, scale = 0.4, label={[xshift=0.4cm, yshift=-0.5cm]\footnotesize Bivalves}] at (6*45+45/2:\radius) {};
\node[draw, fill, circle, scale = 0.4, label={[xshift=0.4cm, yshift=-0.2cm]\footnotesize Rays}] at (7*45+45/2:\radius) {};

\node[fill, circle, scale = 0, label={[xshift=-0.175cm, yshift=-0.3cm]\footnotesize 27\%}] at (45:\radius) {};
\node[fill, circle, scale = 0, label={[xshift=0cm, yshift=-0.4cm]\footnotesize 23\%}] at (90:\radius) {};
\node[fill, circle, scale = 0, label={[xshift=0.2cm, yshift=-0.3cm]\footnotesize 6\%}] at (3*45:\radius) {};
\node[fill, circle, scale = 0, label={[xshift=0.3cm, yshift=-0.15cm]\footnotesize 47\%}] at (4*45:\radius) {};
\node[fill, circle, scale = 0, label={[xshift=0.21cm, yshift=0cm]\footnotesize <1\%}] at (5*45:\radius) {};
\node[fill, circle, scale = 0, label={[xshift=0cm, yshift=0cm]\footnotesize 20\%}] at (6*45:\radius) {};
\node[fill, circle, scale = 0, label={[xshift=-0.14cm, yshift=0cm]\footnotesize 37\%}] at (7*45:\radius) {};
\node[fill, circle, scale = 0, label={[xshift=-0.3cm, yshift=-0.2cm]\footnotesize <1\%}] at (8*45:\radius) {};

\draw[->, >=latex] (45/2+\margin:\radius) 
    arc (45/2+\margin:45+45/2-\margin:\radius);
\draw[->, >=latex] (45+45/2+\margin:\radius) 
    arc (45+45/2+\margin:2*45+45/2-\margin:\radius);
\draw[->, >=latex] (2*45+45/2+\margin:\radius) 
    arc (2*45+45/2+\margin:3*45+45/2-\margin:\radius);
\draw[->, >=latex] (3*45+45/2+\margin:\radius) 
    arc (3*45+45/2+\margin:4*45+45/2-\margin:\radius);
\draw[->, >=latex] (4*45+45/2+\margin:\radius) 
    arc (4*45+45/2+\margin:5*45+45/2-\margin:\radius);
\draw[->, >=latex] (5*45+45/2+\margin:\radius) 
    arc (5*45+45/2+\margin:6*45+45/2-\margin:\radius);
\draw[->, >=latex] (6*45+45/2+\margin:\radius) 
    arc (6*45+45/2+\margin:7*45+45/2-\margin:\radius);
\draw[->, >=latex] (7*45+45/2+\margin:\radius) 
    arc (7*45+45/2+\margin:8*45+45/2-\margin:\radius);

\end{tikzpicture}
& \hspace{-1mm}
\begin{tikzpicture}[node distance =1.75cm, every text node part/.style={align=center}]
\label{fig:foodb}
\def \margin {5}
\def \radius {0.9cm}
\def \rm {0.2cm}

\node[draw, fill, circle, scale = 0.4, label={[xshift=0cm, yshift=0cm]\footnotesize Predatory \\  \footnotesize Gastropods}] at (90:\radius) {};
\node[draw, fill, circle, scale = 0.4, label={[xshift=0.8cm, yshift=-.6cm]\footnotesize Detritivorous \\  \footnotesize Gastropods}] at (330:\radius) {};
\node[draw, fill, circle, scale = 0.4, label={[xshift=-0.4cm, yshift=-.6cm]\footnotesize Benthic \\ \footnotesize POC}] at (210:\radius) {};

\node[fill, circle, scale = 0, label={[xshift=-0.1cm, yshift=-0.2cm]\footnotesize 52\%}] at (30:\radius-\rm) {};
\node[fill, circle, scale = 0, label={[xshift=0.1cm, yshift=-0.2cm]\footnotesize <1\%}] at (150:\radius-\rm) {};
\node[fill, circle, scale = 0, label={[xshift=0cm, yshift=-0.2cm]\footnotesize 47\%}] at (270:\radius-\rm) {};

\draw[->, >=latex] (-30+\margin:\radius) 
    arc (-30+\margin:90-\margin:\radius);
\draw[->, >=latex] (90+\margin:\radius) 
    arc (90+\margin:210-\margin:\radius);
\draw[->, >=latex] (210+\margin:\radius) 
    arc (210+\margin:330-\margin:\radius);

\end{tikzpicture}
& \hspace{-2mm}
\begin{tikzpicture}[node distance =1cm, every text node part/.style={align=center}]
\label{fig:foodd}
\def \margin {5}
\def \radius {0.9cm}
\def \rm {0.2cm}

\node[draw, fill, circle, scale = 0.4, label={[xshift=0.55cm, yshift=-0.2cm]\footnotesize \underline{\textbf{Snook}}}] at (30:\radius) {};
\node[draw, fill, circle, scale = 0.4, label={[xshift=0cm, yshift=0cm]\footnotesize \underline{\textbf{Crocodile}}}] at (90:\radius) {};
\node[draw, fill, circle, scale = 0.4, label={[xshift=-0.7cm, yshift=-0.1cm]\footnotesize Water POC}] at (150:\radius) {};
\node[draw, fill, circle, scale = 0.4, label={[xshift=-0.6cm, yshift=-0.3cm]\footnotesize Bivalves}] at (210:\radius) {};
\node[draw, fill, circle, scale = 0.4, label={[xshift=0cm, yshift=-0.55cm]\footnotesize Omnivorous Crabs}] at (270:\radius) {};
\node[draw, fill, circle, scale = 0.4, label={[xshift=0.7cm, yshift=-0.5cm]\footnotesize Atlantic \\ \footnotesize Blue Crab}] at (330:\radius) {};

\node[fill, circle, scale = 0, label={[xshift=-0.25cm, yshift=-0.3cm]\footnotesize 3\%}] at (0:\radius) {};
\node[fill, circle, scale = 0, label={[xshift=0cm, yshift=-0.4cm]\footnotesize <1\%}] at (60:\radius) {};
\node[fill, circle, scale = 0, label={[xshift=0.3cm, yshift=-0.3cm]\footnotesize <1\%}] at (120:\radius) {};
\node[fill, circle, scale = 0, label={[xshift=0.3cm, yshift=-0.15cm]\footnotesize 20\%}] at (180:\radius) {};
\node[fill, circle, scale = 0, label={[xshift=0.16cm, yshift=0cm]\footnotesize 5\%}] at (240:\radius) {};
\node[fill, circle, scale = 0, label={[xshift=-.1cm, yshift=0cm]\footnotesize 67\%}] at (300:\radius) {};

\draw[->, >=latex] (30+\margin:\radius) 
    arc (30+\margin:90-\margin:\radius);
\draw[->, >=latex] (90+\margin:\radius) 
    arc (90+\margin:150-\margin:\radius);
\draw[->, >=latex] (150+\margin:\radius) 
    arc (150+\margin:210-\margin:\radius);
\draw[->, >=latex] (210+\margin:\radius) 
    arc (210+\margin:270-\margin:\radius);
\draw[->, >=latex] (270+\margin:\radius) 
    arc (270+\margin:330-\margin:\radius);
\draw[->, >=latex] (-30+\margin:\radius) 
    arc (-30+\margin:30-\margin:\radius);

\end{tikzpicture}
\\ 
(a) & (b) & (c) & (d)    \\
\end{tabular}
\caption{\small (a)-(c): The top 3 results from iteratively mining the most interesting cycles in the Food Web dataset. The percentages next to each edge $(u,v)$ indicates the fraction of the total incoming weight of species $v$. (d) the result when querying ``Snook'' and ``Crocodile''. \label{fig:foodweb1}}
\end{figure*}
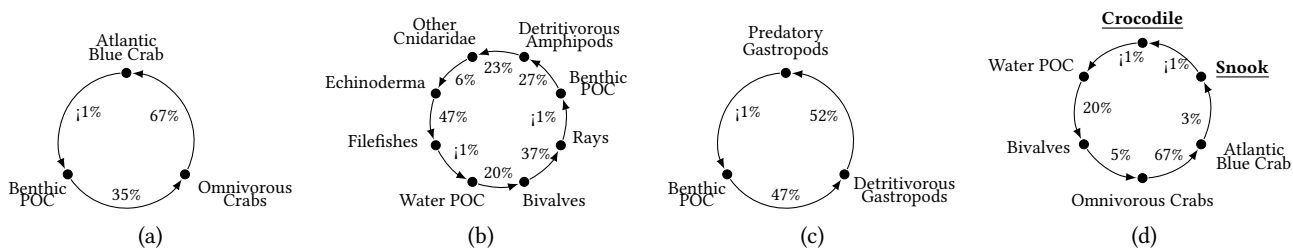

\subsubsection{Trade dataset}
To see the influence of a prior belief model on the resulting cycles, we look at the trading volumes between countries in $2018$.
We set $\lenmax=6$, $q=0.01$, and used $10$ iterations of \localSC. 
First, we fit a geometric model with the weighted in- and outdegree of each node as a prior.
Figure~\ref{fig:tradea} shows the most interesting cycle in the graph: a 2-cycle between the U.S. and the Dominican Republic.
As discussed in Section~\ref{sec:background}, these edges are indeed very interesting: the Dominican Republic is extremely economically-dependent on the U.S. in terms of import and export. However, the converse is not true. 

Figure~\ref{fig:tradeb} shows the most interesting cycle when we take the bilateral trade agreement between the U.S. and Dominican Republic into account as a prior belief.
Since these edges are now more expected, they become less interesting. The new most interesting cycle is another 2-cycle, between China and Sudan. Again, a small country that is economically-dependent on a bigger country.
Figure~\ref{fig:tradec} shows the result when we query both Iran and the U.S., two countries not expected to be in a direct trade relationship because of the U.S. trade embargo on Iran.
This cycle now contains China as an export country for Iran and China linking back to the U.S. 
Figure~\ref{fig:traded} shows the result when we take the trade relationships between the U.S. and China into account as well.
The direct edge is now expected, and the resulting heuristic takes this into account by placing an intermediate country in between, Nicaragua.
Nicaragua heavily depends on China for its import, and the U.S. for its export, thus making these connections interesting.

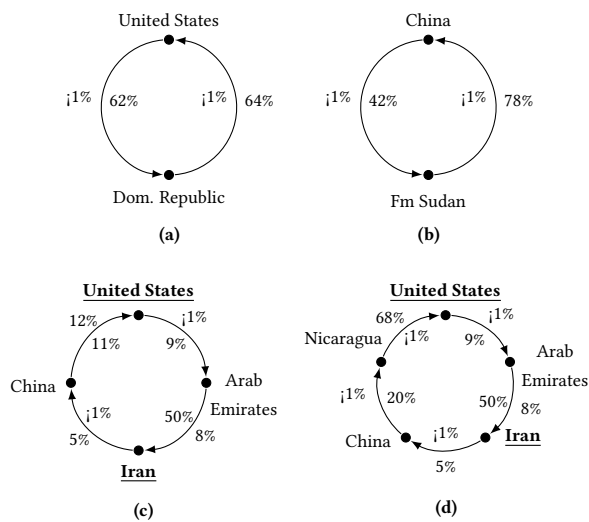
\begin{figure}
\centering
\captionsetup[subfigure]{}
\subfloat[]{
\begin{tikzpicture}[node distance =1.75cm, every text node part/.style={align=center}]
\label{fig:tradea}
\def \margin {5}
\def \radius {0.9cm}
\def \rm {0.2cm}

\node[draw, fill, circle, scale = 0.4, label={[xshift=0cm, yshift=0cm]\footnotesize United States}] at (90:\radius) {};
\node[draw, fill, circle, scale = 0.4, label={[xshift=0cm, yshift=-.6cm]\footnotesize Dom. Republic}] at (270:\radius) {};

\node[fill, circle, scale = 0, label={[xshift=-0.3cm, yshift=-0.1cm]\footnotesize <1\%}] at (0:\radius) {};
\node[fill, circle, scale = 0, label={[xshift=0.3cm, yshift=-0.1cm]\footnotesize 64\%}] at (0:\radius) {};

\node[fill, circle, scale = 0, label={[xshift=-0.3cm, yshift=-0.1cm]\footnotesize <1\%}] at (180:\radius) {};
\node[fill, circle, scale = 0, label={[xshift=0.3cm, yshift=-0.1cm]\footnotesize 62\%}] at (180:\radius) {};

\draw[->, >=latex] (-90+\margin:\radius) 
	arc (-90+\margin:90-\margin:\radius);

\draw[->, >=latex] (90+\margin:\radius) 
	arc (90+\margin:270-\margin:\radius);

\end{tikzpicture}
}\quad
\subfloat[]{
\begin{tikzpicture}[node distance =1.75cm, every text node part/.style={align=center}]
\label{fig:tradeb}
\def \margin {5}
\def \radius {0.9cm}
\def \rm {0.2cm}

\node[draw, fill, circle, scale = 0.4, label={[xshift=0cm, yshift=0cm]\footnotesize China}] at (90:\radius) {};
\node[draw, fill, circle, scale = 0.4, label={[xshift=0cm, yshift=-.6cm]\footnotesize Fm Sudan}] at (270:\radius) {};

\node[fill, circle, scale = 0, label={[xshift=-0.3cm, yshift=-0.1cm]\footnotesize <1\%}] at (0:\radius) {};
\node[fill, circle, scale = 0, label={[xshift=0.3cm, yshift=-0.1cm]\footnotesize 78\%}] at (0:\radius) {};

\node[fill, circle, scale = 0, label={[xshift=-0.3cm, yshift=-0.1cm]\footnotesize <1\%}] at (180:\radius) {};
\node[fill, circle, scale = 0, label={[xshift=0.3cm, yshift=-0.1cm]\footnotesize 42\%}] at (180:\radius) {};

\draw[->, >=latex] (-90+\margin:\radius) 
	arc (-90+\margin:90-\margin:\radius);

\draw[->, >=latex] (90+\margin:\radius) 
	arc (90+\margin:270-\margin:\radius);

\end{tikzpicture}
}

\subfloat[]{
\begin{tikzpicture}[node distance =1.75cm, every text node part/.style={align=center},baseline]
\label{fig:tradec}
\def \margin {5}
\def \radius {0.9cm}
\def \rm {0.2cm}

\node[draw, fill, circle, scale = 0.4, label={[xshift=0cm, yshift=0cm]\footnotesize \underline{\textbf{United States}}}] at (90:\radius) {};
\node[draw, fill, circle, scale = 0.4, label={[xshift=0.5cm, yshift=-.6cm]\footnotesize Arab \\ \footnotesize Emirates}] at (0:\radius) {};
\node[draw, fill, circle, scale = 0.4, label={[xshift=-0.5cm, yshift=-.3cm]\footnotesize China}] at (180:\radius) {};
\node[draw, fill, circle, scale = 0.4, label={[xshift=0cm, yshift=-.6cm]\footnotesize \underline{\textbf{Iran}}}] at (270:\radius) {};

\node[fill, circle, scale = 0, label={[xshift=0.1cm, yshift=0cm]\footnotesize <1\%}] at (45:\radius) {};
\node[fill, circle, scale = 0, label={[xshift=-0.14cm, yshift=-0.3cm]\footnotesize 9\%}] at (45:\radius) {};

\node[fill, circle, scale = 0, label={[xshift=-0.1cm, yshift=0cm]\footnotesize 12\%}] at (135:\radius) {};
\node[fill, circle, scale = 0, label={[xshift=0.2cm, yshift=-0.3cm]\footnotesize 11\%}] at (135:\radius) {};

\node[fill, circle, scale = 0, label={[xshift=0.1cm, yshift=0cm]\footnotesize <1\%}] at (225:\radius) {};
\node[fill, circle, scale = 0, label={[xshift=-0.15cm, yshift=-0.3cm]\footnotesize 5\%}] at (225:\radius) {};

\node[fill, circle, scale = 0, label={[xshift=-0.1cm, yshift=0cm]\footnotesize 50\%}] at (315:\radius) {};
\node[fill, circle, scale = 0, label={[xshift=0.25cm, yshift=-0.25cm]\footnotesize 8\%}] at (315:\radius) {};

\draw[<-, >=latex] (0+\margin:\radius) 
	arc (0+\margin:90-\margin:\radius);

\draw[<-, >=latex] (90+\margin:\radius) 
	arc (90+\margin:180-\margin:\radius);

\draw[<-, >=latex] (180+\margin:\radius) 
	arc (180+\margin:270-\margin:\radius);

\draw[<-, >=latex] (270+\margin:\radius) 
	arc (270+\margin:360-\margin:\radius);

\end{tikzpicture}
}
\subfloat[]{
\begin{tikzpicture}[node distance =1.75cm, every text node part/.style={align=center},baseline]
\label{fig:traded}
\def \margin {5}
\def \radius {0.9cm}
\def \rm {0.2cm}

\node[draw, fill, circle, scale = 0.4, label={[xshift=0cm, yshift=0cm]\footnotesize \underline{\textbf{United States}}}] at (90:\radius) {};
\node[draw, fill, circle, scale = 0.4, label={[xshift=0.6cm, yshift=-0.5cm]\footnotesize Arab\\ \footnotesize Emirates}] at (90-72:\radius) {};
\node[draw, fill, circle, scale = 0.4, label={[xshift=-0.5cm, yshift=0cm]\footnotesize Nicaragua}] at (90+72:\radius) {};
\node[draw, fill, circle, scale = 0.4, label={[xshift=-0.5cm, yshift=-0.3cm]\footnotesize China}] at (90+2*72:\radius) {};
\node[draw, fill, circle, scale = 0.4, label={[xshift=0.5cm, yshift=-0.3cm]\footnotesize \underline{\textbf{Iran}}}] at (90+3*72:\radius) {};

\node[draw, fill, circle, scale = 0, label={[xshift=0cm, yshift=-.5cm]\footnotesize }] at (270:\radius) {};

\node[fill, circle, scale = 0, label={[xshift=0.1cm, yshift=0.05cm]\footnotesize <1\%}] at (45:\radius) {};
\node[fill, circle, scale = 0, label={[xshift=-0.25cm, yshift=-0.25cm]\footnotesize 9\%}] at (45:\radius) {};
\node[fill, circle, scale = 0, label={[xshift=-0.1cm, yshift=0.05cm]\footnotesize 68\%}] at (135:\radius) {};
\node[fill, circle, scale = 0, label={[xshift=0.25cm, yshift=-0.25cm]\footnotesize <1\%}] at (135:\radius) {};

\node[fill, circle, scale = 0, label={[xshift=-0.6cm, yshift=0.25cm]\footnotesize <1\%}] at (225:\radius) {};
\node[fill, circle, scale = 0, label={[xshift=0.05cm, yshift=0.25cm]\footnotesize 20\%}] at (225:\radius) {};

\node[fill, circle, scale = 0, label={[xshift=0cm, yshift=-0.4cm]\footnotesize 5\%}] at (270:\radius) {};
\node[fill, circle, scale = 0, label={[xshift=0cm, yshift=0cm]\footnotesize <1\%}] at (270:\radius) {};

\node[fill, circle, scale = 0, label={[xshift=0cm, yshift=0.2cm]\footnotesize 50\%}] at (-45:\radius) {};
\node[fill, circle, scale = 0, label={[xshift=0.5cm, yshift=0.1cm]\footnotesize 8\%}] at (-45:\radius) {};

\draw[<-, >=latex] (90-72+\margin:\radius) 
	arc (90-72+\margin:90-\margin:\radius);

\draw[<-, >=latex] (90+\margin:\radius) 
	arc (90+\margin:90+72-\margin:\radius);

\draw[<-, >=latex] (90+72+\margin:\radius) 
	arc (90+72+\margin:90+2*72-\margin:\radius);

\draw[<-, >=latex] (90+2*72+\margin:\radius) 
	arc (90+2*72+\margin:90+3*72-\margin:\radius);

\draw[<-, >=latex] (90+3*72+\margin:\radius) 
	arc (90+3*72+\margin:90+4*72-\margin:\radius);

\end{tikzpicture}
}
\caption{\small The $\%$ outside the circle denotes the weight of an edge $(u,v)$, relative to the total export of $u$.
The $\%$ inside the circle the denotes the weight relative to the total import of $v$.
The most interesting cycles: (a) with a prior on weighted in- and out-degree of each country (b) with a prior on the trading volume between US and Dom. Rep. (c) with Iran and US as query nodes, with a prior on weighted in- and out-degrees of each country (d) Iran and US as query nodes, with a prior on trading volume between US and China.\label{fig:tradecase}}
\end{figure}

\section{Related work}
\label{sec:related}
Discovering cyclic patterns in graphs has not received much attention in the data-mining community.
\citet{Giscard2017EvaluatingBO} evaluate the balance of a signed social network by finding simple cycles.
\citet{Kumar2017FindingST} propose an algorithm for enumerating all simple cycles in a directed temporal network, by extending Johnson's algorithm~\cite{Johnson} to a temporal setting. 
Building on the subjective interestingness framework, \citet{vanleeuwen2016} studied the problem of subjectively interesting dense subgraphs, and \citet{Adriaens2019} studied subjectively interesting Steiner trees and forests. We discuss algorithmically relevant work in Section~\ref{sec:algorithms}.


\section{Conclusions \& Further Challenges}
\label{sec:conclusions}
We introduced the problem of discovering interesting cycles in directed graphs, by formally defining the problem of finding the maximum subjectively interesting (Steiner) cycle. Our work opens interesting directions for future research. First, it is worth to consider the usefulness of a non-simple cycle (a so-called \emph{tour}) as a data-mining pattern. Second, extending our results to undirected graphs is non-trivial. Karp's algorithm does not apply, and more general algorithms for minimum ratio problems have to be considered.





\begin{acks}
This work was supported by
the ERC under the
EU's Seventh Framework Programme (FP7/2007-2013) / ERC
Grant Agreement no.\ 615517,
FWO (project no.\ G091017N, G0F9816N),
the EU's Horizon 2020 research and innovation programme and the FWO under the Marie Sklodowska-Curie Grant Agreement no.\ 665501,
three Academy of Finland projects  (286211, 313927, and 317085), 
and the EC H2020 RIA project ``SoBigData'' (654024).
\end{acks}

%
\bibliographystyle{ACM-Reference-Format}
\bibliography{paper}


\begin{thebibliography}{24}


\ifx \showCODEN    \undefined \def \showCODEN     #1{\unskip}     \fi
\ifx \showDOI      \undefined \def \showDOI       #1{#1}\fi
\ifx \showISBNx    \undefined \def \showISBNx     #1{\unskip}     \fi
\ifx \showISBNxiii \undefined \def \showISBNxiii  #1{\unskip}     \fi
\ifx \showISSN     \undefined \def \showISSN      #1{\unskip}     \fi
\ifx \showLCCN     \undefined \def \showLCCN      #1{\unskip}     \fi
\ifx \shownote     \undefined \def \shownote      #1{#1}          \fi
\ifx \showarticletitle \undefined \def \showarticletitle #1{#1}   \fi
\ifx \showURL      \undefined \def \showURL       {\relax}        \fi
\providecommand\bibfield[2]{#2}
\providecommand\bibinfo[2]{#2}
\providecommand\natexlab[1]{#1}
\providecommand\showeprint[2][]{arXiv:#2}

\bibitem[\protect\citeauthoryear{Adriaens, Lijffijt, and De~Bie}{Adriaens
  et~al\mbox{.}}{2019}]%
        {Adriaens2019}
\bibfield{author}{\bibinfo{person}{F. Adriaens}, \bibinfo{person}{J. Lijffijt},
  {and} \bibinfo{person}{T. De~Bie}.} \bibinfo{year}{2019}\natexlab{}.
\newblock \showarticletitle{Subjectively interesting connecting trees and
  forests}.
\newblock \bibinfo{journal}{\emph{Data Mining and Knowledge Discovery}}
  \bibinfo{volume}{33}, \bibinfo{number}{4} (\bibinfo{year}{2019}),
  \bibinfo{pages}{1088--1124}.
\newblock
\showISSN{1573-756X}


\bibitem[\protect\citeauthoryear{Bj{\"o}rklund, Husfeldt, and
  Khanna}{Bj{\"o}rklund et~al\mbox{.}}{2004}]%
        {bjorklund}
\bibfield{author}{\bibinfo{person}{A. Bj{\"o}rklund}, \bibinfo{person}{T.
  Husfeldt}, {and} \bibinfo{person}{S. Khanna}.}
  \bibinfo{year}{2004}\natexlab{}.
\newblock \showarticletitle{Approximating Longest Directed Paths and Cycles}.
  In \bibinfo{booktitle}{\emph{Automata, Languages and Programming}}.
  \bibinfo{publisher}{Springer Berlin}.
\newblock
\showISBNx{978-3-540-27836-8}


\bibitem[\protect\citeauthoryear{Colladon and Remondi}{Colladon and
  Remondi}{2017}]%
        {colladon2017using}
\bibfield{author}{\bibinfo{person}{A.~F. Colladon} {and} \bibinfo{person}{E.
  Remondi}.} \bibinfo{year}{2017}\natexlab{}.
\newblock \showarticletitle{Using social network analysis to prevent money
  laundering}.
\newblock \bibinfo{journal}{\emph{Expert Systems with Applications}}
  \bibinfo{volume}{67} (\bibinfo{year}{2017}), \bibinfo{pages}{49--58}.
\newblock


\bibitem[\protect\citeauthoryear{Crescenzi}{Crescenzi}{1997}]%
        {Crescenzi1997ASG}
\bibfield{author}{\bibinfo{person}{P. Crescenzi}.}
  \bibinfo{year}{1997}\natexlab{}.
\newblock \showarticletitle{A Short Guide to Approximation Preserving
  Reductions}. In \bibinfo{booktitle}{\emph{IEEE Conference on Computational
  Complexity}}.
\newblock


\bibitem[\protect\citeauthoryear{Dasdan and Gupta}{Dasdan and Gupta}{2006}]%
        {Dasdan}
\bibfield{author}{\bibinfo{person}{A. Dasdan} {and} \bibinfo{person}{R.~K.
  Gupta}.} \bibinfo{year}{2006}\natexlab{}.
\newblock \showarticletitle{Faster Maximum and Minimum Mean Cycle Algorithms
  for System-performance Analysis}.
\newblock \bibinfo{journal}{\emph{TCADICS}} \bibinfo{volume}{17},
  \bibinfo{number}{10} (\bibinfo{date}{Nov.} \bibinfo{year}{2006}).
\newblock
\showISSN{0278-0070}


\bibitem[\protect\citeauthoryear{{De Bie}}{{De Bie}}{2011a}]%
        {debie2011}
\bibfield{author}{\bibinfo{person}{T. {De Bie}}.}
  \bibinfo{year}{2011}\natexlab{a}.
\newblock \showarticletitle{An information theoretic framework for data
  mining}. In \bibinfo{booktitle}{\emph{Proc. KDD}}. \bibinfo{pages}{564--572}.
\newblock


\bibitem[\protect\citeauthoryear{{De Bie}}{{De Bie}}{2011b}]%
        {Deb:10a}
\bibfield{author}{\bibinfo{person}{T. {De Bie}}.}
  \bibinfo{year}{2011}\natexlab{b}.
\newblock \showarticletitle{Maximum entropy models and subjective
  interestingness: an application to tiles in binary databases}.
\newblock \bibinfo{journal}{\emph{DMKD}}  \bibinfo{volume}{23}
  (\bibinfo{year}{2011}), \bibinfo{pages}{407--446}.
\newblock


\bibitem[\protect\citeauthoryear{Dunne, Williams, and Martinez}{Dunne
  et~al\mbox{.}}{2002}]%
        {dunne2002network}
\bibfield{author}{\bibinfo{person}{J.~A. Dunne}, \bibinfo{person}{R.~J.
  Williams}, {and} \bibinfo{person}{N.~D. Martinez}.}
  \bibinfo{year}{2002}\natexlab{}.
\newblock \showarticletitle{Network structure and biodiversity loss in food
  webs: robustness increases with connectance}.
\newblock \bibinfo{journal}{\emph{Ecology letters}} (\bibinfo{year}{2002}).
\newblock


\bibitem[\protect\citeauthoryear{Gabow and Nie}{Gabow and Nie}{2004}]%
        {Gabowcycles}
\bibfield{author}{\bibinfo{person}{H.~N. Gabow} {and} \bibinfo{person}{S.
  Nie}.} \bibinfo{year}{2004}\natexlab{}.
\newblock \showarticletitle{Finding a Long Directed Cycle}. In
  \bibinfo{booktitle}{\emph{ACM Symposium on Discrete Algorithms}}.
  \bibinfo{publisher}{Society for Industrial and Applied Mathematics}.
\newblock
\showISBNx{0-89871-558-X}


\bibitem[\protect\citeauthoryear{Giscard, Rochet, and Wilson}{Giscard
  et~al\mbox{.}}{2017}]%
        {Giscard2017EvaluatingBO}
\bibfield{author}{\bibinfo{person}{P.L. Giscard}, \bibinfo{person}{P. Rochet},
  {and} \bibinfo{person}{R.~C. Wilson}.} \bibinfo{year}{2017}\natexlab{}.
\newblock \showarticletitle{Evaluating balance on social networks from their
  simple cycles}.
\newblock \bibinfo{journal}{\emph{J. Complex Networks}}  \bibinfo{volume}{5}
  (\bibinfo{year}{2017}).
\newblock


\bibitem[\protect\citeauthoryear{Howard}{Howard}{1960}]%
        {howard:dp}
\bibfield{author}{\bibinfo{person}{R.~A. Howard}.}
  \bibinfo{year}{1960}\natexlab{}.
\newblock \bibinfo{booktitle}{\emph{Dynamic Programming and Markov Processes}}.
\newblock \bibinfo{publisher}{MIT Press}, \bibinfo{address}{Cambridge, MA}.
\newblock


\bibitem[\protect\citeauthoryear{Johnson}{Johnson}{1975}]%
        {Johnson}
\bibfield{author}{\bibinfo{person}{D. Johnson}.}
  \bibinfo{year}{1975}\natexlab{}.
\newblock \showarticletitle{Finding All the Elementary Circuits of a Directed
  Graph}.
\newblock \bibinfo{journal}{\emph{SIAM J. Comput.}} \bibinfo{volume}{4},
  \bibinfo{number}{1} (\bibinfo{year}{1975}), \bibinfo{pages}{77--84}.
\newblock


\bibitem[\protect\citeauthoryear{Kanellakis and Papadimitriou}{Kanellakis and
  Papadimitriou}{1980}]%
        {Kanellakis}
\bibfield{author}{\bibinfo{person}{P.~C. Kanellakis} {and}
  \bibinfo{person}{C.~H. Papadimitriou}.} \bibinfo{year}{1980}\natexlab{}.
\newblock \showarticletitle{Local Search for the Asymmetric Traveling Salesman
  Problem}.
\newblock \bibinfo{journal}{\emph{Operations Research}} \bibinfo{volume}{28},
  \bibinfo{number}{5} (\bibinfo{year}{1980}), \bibinfo{pages}{1086--1099}.
\newblock


\bibitem[\protect\citeauthoryear{Karp}{Karp}{1978}]%
        {KARP}
\bibfield{author}{\bibinfo{person}{R.~M. Karp}.}
  \bibinfo{year}{1978}\natexlab{}.
\newblock \showarticletitle{A characterization of the minimum cycle mean in a
  digraph}.
\newblock \bibinfo{journal}{\emph{Discrete Mathematics}} \bibinfo{volume}{23},
  \bibinfo{number}{3} (\bibinfo{year}{1978}), \bibinfo{pages}{309 -- 311}.
\newblock


\bibitem[\protect\citeauthoryear{Kumar and Calders}{Kumar and Calders}{2017}]%
        {Kumar2017FindingST}
\bibfield{author}{\bibinfo{person}{R. Kumar} {and} \bibinfo{person}{T.
  Calders}.} \bibinfo{year}{2017}\natexlab{}.
\newblock \showarticletitle{Finding simple temporal cycles in an interaction
  network}. In \bibinfo{booktitle}{\emph{TD-LSG@PKDD/ECML}}.
\newblock


\bibitem[\protect\citeauthoryear{Kwon and Cho}{Kwon and Cho}{2007}]%
        {kwon2007analysis}
\bibfield{author}{\bibinfo{person}{Y.K. Kwon} {and} \bibinfo{person}{K.~H.
  Cho}.} \bibinfo{year}{2007}\natexlab{}.
\newblock \showarticletitle{Analysis of feedback loops and robustness in
  network evolution based on Boolean models}.
\newblock \bibinfo{journal}{\emph{BMC bioinformatics}} (\bibinfo{year}{2007}).
\newblock


\bibitem[\protect\citeauthoryear{Leeuwen, De~Bie, Spyropoulou, and
  Mesnage}{Leeuwen et~al\mbox{.}}{2016}]%
        {vanleeuwen2016}
\bibfield{author}{\bibinfo{person}{M.~van Leeuwen}, \bibinfo{person}{T.
  De~Bie}, \bibinfo{person}{E. Spyropoulou}, {and} \bibinfo{person}{C.
  Mesnage}.} \bibinfo{year}{2016}\natexlab{}.
\newblock \showarticletitle{Subjective interestingness of subgraph patterns}.
\newblock \bibinfo{journal}{\emph{Mach. Learn.}} \bibinfo{volume}{105},
  \bibinfo{number}{1} (\bibinfo{year}{2016}), \bibinfo{pages}{41--75}.
\newblock


\bibitem[\protect\citeauthoryear{Lin and Kernighan}{Lin and Kernighan}{1973}]%
        {Lin}
\bibfield{author}{\bibinfo{person}{S. Lin} {and} \bibinfo{person}{B.~W.
  Kernighan}.} \bibinfo{year}{1973}\natexlab{}.
\newblock \showarticletitle{An Effective Heuristic Algorithm for the
  Traveling-Salesman Problem}.
\newblock \bibinfo{journal}{\emph{Oper. Res.}} \bibinfo{volume}{21},
  \bibinfo{number}{2} (\bibinfo{date}{April} \bibinfo{year}{1973}),
  \bibinfo{pages}{498--516}.
\newblock
\showISSN{0030-364X}


\bibitem[\protect\citeauthoryear{Perl and Shiloach}{Perl and Shiloach}{1978}]%
        {Perl1978FindingTD}
\bibfield{author}{\bibinfo{person}{Yehoshua Perl} {and} \bibinfo{person}{Yossi
  Shiloach}.} \bibinfo{year}{1978}\natexlab{}.
\newblock \showarticletitle{Finding Two Disjoint Paths Between Two Pairs of
  Vertices in a Graph}.
\newblock \bibinfo{journal}{\emph{J. ACM}}  \bibinfo{volume}{25}
  (\bibinfo{year}{1978}), \bibinfo{pages}{1--9}.
\newblock


\bibitem[\protect\citeauthoryear{Salazar-Gonzalez}{Salazar-Gonzalez}{2003}]%
        {SALAZARSteinercycle}
\bibfield{author}{\bibinfo{person}{Juan-Jose Salazar-Gonzalez}.}
  \bibinfo{year}{2003}\natexlab{}.
\newblock \showarticletitle{The Steiner cycle polytope}.
\newblock \bibinfo{journal}{\emph{European Journal of Operational Research}}
  \bibinfo{volume}{147}, \bibinfo{number}{3} (\bibinfo{year}{2003}),
  \bibinfo{pages}{671 -- 679}.
\newblock


\bibitem[\protect\citeauthoryear{Silberschatz and Tuzhilin}{Silberschatz and
  Tuzhilin}{1996}]%
        {silberschatz1996}
\bibfield{author}{\bibinfo{person}{A. Silberschatz} {and} \bibinfo{person}{A.
  Tuzhilin}.} \bibinfo{year}{1996}\natexlab{}.
\newblock \showarticletitle{On subjective measures of interestingness in
  knowledge discovery}. In \bibinfo{booktitle}{\emph{Proc. KDD}}.
  \bibinfo{pages}{275--281}.
\newblock


\bibitem[\protect\citeauthoryear{Steinov{\'a}}{Steinov{\'a}}{2010}]%
        {Ste10}
\bibfield{author}{\bibinfo{person}{Monika Steinov{\'a}}.}
  \bibinfo{year}{2010}\natexlab{}.
\newblock \showarticletitle{Approximability of the minimum Steiner cycle
  problem}.
\newblock \bibinfo{journal}{\emph{Computing and Informatics}}
  \bibinfo{volume}{29} (\bibinfo{year}{2010}), \bibinfo{pages}{1349--1357}.
\newblock
Issue 6+.


\bibitem[\protect\citeauthoryear{Ulanowicz, Bondavalli, and
  S~Egnotovich}{Ulanowicz et~al\mbox{.}}{1998}]%
        {floridabay}
\bibfield{author}{\bibinfo{person}{R. Ulanowicz}, \bibinfo{person}{C
  Bondavalli}, {and} \bibinfo{person}{M S~Egnotovich}.}
  \bibinfo{year}{1998}\natexlab{}.
\newblock \showarticletitle{Network Analysis of Trophic Dynamics in South
  Florida Ecosystem, FY 97: The Florida Bay Ecosystem}.
\newblock \bibinfo{journal}{\emph{Report to the United States Geological
  Service Biological Resources Division.}} (\bibinfo{year}{1998}).
\newblock


\bibitem[\protect\citeauthoryear{Young, Tarjan, and Orlin}{Young
  et~al\mbox{.}}{1991}]%
        {Young1991FasterPS}
\bibfield{author}{\bibinfo{person}{Neal~E. Young}, \bibinfo{person}{Robert~E.
  Tarjan}, {and} \bibinfo{person}{James~B. Orlin}.}
  \bibinfo{year}{1991}\natexlab{}.
\newblock \showarticletitle{Faster Parametric Shortest Path and Minimum Balance
  Algorithms}.
\newblock \bibinfo{journal}{\emph{Networks}}  \bibinfo{volume}{21}
  (\bibinfo{year}{1991}), \bibinfo{pages}{205--221}.
\newblock


\end{thebibliography}
%

\end{document}